\documentclass{article}
\PassOptionsToPackage{numbers, compress}{natbib}
\usepackage{microtype}
\usepackage{graphicx}
\usepackage{subcaption}
\usepackage{wrapfig}

\usepackage{hyperref}

\usepackage{caption}
\usepackage{booktabs} 
\usepackage{multirow}
\usepackage[table]{xcolor}
\usepackage{tikz}
\usepackage{setspace}

\usepackage{arydshln} 

\usepackage[preprint]{neurips_2026}

\usepackage{amsmath}
\usepackage{amssymb}
\usepackage{mathtools}
\usepackage{amsthm}
\usepackage[capitalize,noabbrev]{cleveref}
\theoremstyle{plain}
\newtheorem{theorem}{Theorem}[section]
\newtheorem{proposition}[theorem]{Proposition}
\newtheorem{lemma}[theorem]{Lemma}

\theoremstyle{definition}

\theoremstyle{remark}

\usepackage[most]{tcolorbox}
\usepackage[normalem]{ulem}

\newtcolorbox{fullwidthprompt}{
  width=\textwidth,
  colback=gray!5,
  colframe=black!50,
  boxrule=0.5pt
}

\title{Toward Robust GraphRAG: Mitigating Retrieval Drift and Hallucination from Imperfect Knowledge Graphs}

\author{
  \textbf{Yizhuo Ma} \quad
  \textbf{Jinchuan Xu} \quad
  \textbf{Tao Wen} \quad
  \textbf{Qizhi Chen} \quad
  \textbf{Jiakai Li} \\
  \textbf{Rongzheng Wang} \quad
  \textbf{Muquan Li} \quad
  \textbf{Shuang Liang} \quad
  \textbf{Ke Qin} \\
  University of Electronic Science and Technology of China
}

\begin{document}

\maketitle

\begin{abstract}

Graph Retrieval-Augmented Generation (GraphRAG) has become a common approach for multi-hop reasoning by using knowledge graphs (KGs) as structured retrieval indexes. However, most existing GraphRAG methods implicitly assume that LLM-constructed KGs provide structural support for evidence chaining. In this paper, we show that this assumption does not always hold in practice through an empirical analysis, and identify two recurring KG issue modes often overlooked by current retrievers: spurious noise and incomplete information. Spurious noise induces retrieval drift toward plausible but unsupported triples, whereas incomplete information leads to retrieval hallucination by forcing continuation through under-supported graph structure. To address these challenges, we propose CS-RAG, a robust GraphRAG framework that mitigates the impact of imperfect KGs during retrieval rather than relying on KG repair. CS-RAG first plans each query as an ordered sequence of executable atomic constraints and performs fine-grained anchor- and relation-aware retrieval to constrain evidence acquisition around the intended hop semantics. It then applies a sufficiency check to decide whether the retrieved evidence can safely induce variable bindings for subsequent propagation and activates textual recovery when structural support is insufficient, thereby reducing hallucinated structural continuation. Experiments on three multi-hop QA benchmarks show that CS-RAG is less sensitive to builder choice and remains stable under controlled KG issue injection.
Code is available at: \url{https://github.com/myz12138/CS-RAG/}
\end{abstract}

\section{Introduction}
\maketitle

\begingroup
\renewcommand{\thefootnote}{}
\footnotetext{
Emails: \texttt{myz@std.uestc.edu.cn}, \texttt{qinke@uestc.edu.cn}.

Corresponding author: Ke Qin. 
}
\endgroup

Retrieval-Augmented Generation (RAG)~\cite{rag1} grounds LLM~\cite{qwen3_technical_report_2025,meta2024llama33_modelcard,openai2025gpt5systemcard} outputs in external documents, but retrieval becomes brittle when a question requires several passages to be linked in a specific order. Graph Retrieval-Augmented Generation (GraphRAG)~\cite{graphrag_survey1,graphrag_survey2} addresses this limitation by using LLM-constructed knowledge graphs (KGs)~\cite{KG1,KG2,KG3} as structured retrieval indexes, making relations among evidence units explicit for multi-hop reasoning~\cite{rag_app2,RAG_APP1}. Recent methods exploit such graph structure through global propagation, learned graph scoring, or path-centric pruning to improve evidence expansion and selection, yielding promising results~\cite{hipporag,gfm-rag,pathrag,neuropath}.

\begin{figure*}[ht]
    \centering
    \includegraphics[width=0.98\linewidth]{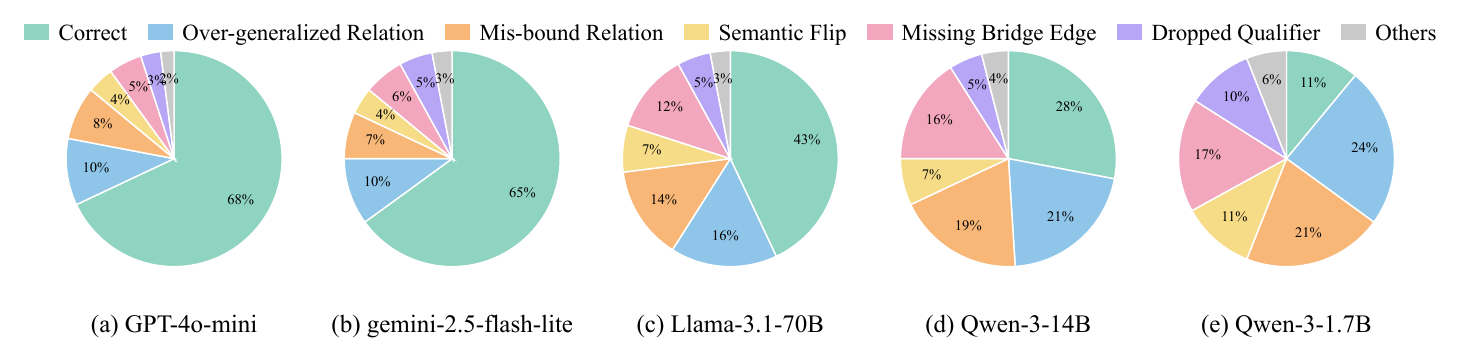}
   \caption{Distribution of correct extractions and incorrect extraction patterns under different KG builders. The statistics are computed from sampled queries across MuSiQue, 2WikiMultihopQA, and HotpotQA, with 100 queries sampled from each dataset.}
    \label{fig:observation}
\end{figure*}

Despite their different retrieval designs, these methods largely share one implicit assumption that the constructed KG is reliable enough to serve as a structural index. 
However, this assumption can be fragile in practice because GraphRAG pipelines rely on LLM builders to construct KGs from source documents~\cite{builder1,builder2}, while deployment constraints such as privacy and cost may lead to different builder choices. These builders can differ substantially in parameter scale and extraction capability~\cite{builder3}, causing the resulting KGs to exhibit varying levels of quality issues that undermine downstream retrieval performance. As shown in Figure~\ref{fig:observation}, we observe the extraction quality of query-related gold supporting sentences across datasets under different KG builders. Extraction correctness is positively correlated with builder capability and model scale, yet even the strongest builder produces correct tuples for only 68\%.

In contrast, the remaining incorrect tuples are often overlooked during retrieval and can systematically distort hop-wise evidence acquisition once consumed as structural evidence. Specifically, we observe that these incorrect tuples fall into five recurring patterns, with low-frequency cases grouped as Others, and their proportions vary across builders. (Appendix~\ref{app:issue_patterns} provides detailed definitions of these patterns.) Based on their effects on downstream retrieval, we summarize these patterns into two broader issue modes: (1) \textbf{spurious noise}, including over-generalized relations, mis-bound relations, and semantic flips, and (2) \textbf{incomplete information}, including missing bridge edges and dropped qualifiers. The former preserves plausible graph connectivity but corrupts hop semantics, as fine-grained relations may be weakened into vague associations, attached to unsupported co-mentioned entities, or flipped in direction. The latter removes bridge edges or drops qualifiers such as time, location, and conditions, leaving the graph with incomplete structural support for multi-hop evidence chaining. This distinction gives rise to two retrieval challenges:

\begin{itemize}
\item \textbf{(1) Retrieval Drift.}
When the KG contains spurious noise, locally plausible but unsupported triples may still satisfy coarse relevance. Without enforcing the fine-grained constraints required by the current hop, retrieval can treat such triples as valid expansion points, inducing wrong bindings and drifting away from the intended evidence chain.

\item \textbf{(2) Retrieval Hallucination.}
When the KG contains incomplete information, the graph may not provide sufficient structural support for the current hop. Without verifying whether the retrieved candidates can safely support propagation, retrieval tends to continue through weak or partial evidence, producing unsupported bindings and hallucinated structural chains.
\end{itemize}

These challenges raise a key research question: \textit{how to achieve robust retrieval against KG issues?} Directly repairing the constructed KGs is intuitive but often impractical due to the substantial cost of re-extraction, verification, or manual correction. We therefore focus on mitigating the impact of KG issues during retrieval. To this end, we propose \textbf{CS-RAG} (\underline{C}onstraint-based and \underline{S}ufficiency-guided \underline{R}etrieval \underline{A}ugmented \underline{G}eneration), a robust framework for multi-hop retrieval over imperfect KGs. CS-RAG consists of two tightly coupled components for hop-wise evidence acquisition and propagation. \textbf{First}, it introduces a novel atomic-constraint planning scheme that maps a complex query into atomic constraints with explicit intermediate variables and relation variants (e.g., \textit{Apollo 13} -- \{\textit{nominated for}, \ldots, \textit{up for}\} -- ?award). Retrieval is then executed around these constraint-level semantics, reducing coarse query matching and filtering out loosely relevant but spurious triples. \textbf{Second}, We introduce a hop-wise sufficiency criterion into multi-hop retrieval for the first time, and prove that it yields a lower bound on the top-candidate probability, thereby providing a principled way to block ambiguous bindings before they propagate into hallucinated chains. Specifically, CS-RAG performs a distribution-based check to decide whether the retrieved candidates provide reliable support for structural binding or whether textual recovery should instead be triggered. Our contributions are as follows:

1. We empirically show that the common reliable-KG assumption in GraphRAG does not always hold, and reveal that query-relevant incorrect tuples concentrate around recurring extraction patterns that can be summarized into two issue modes: spurious noise and incomplete information.

2. We propose CS-RAG, a robust GraphRAG framework over imperfect KGs. CS-RAG introduces atomic constraints with explicit intermediate variables and relation variants to constrain evidence acquisition against retrieval drift, and further incorporates a distribution-based sufficiency check into hop-wise propagation for the first time to reduce hallucinated structural continuation.

3. Experiments on three multi-hop QA benchmarks show that CS-RAG achieves stronger robustness under both cross-builder evaluation and controlled KG issue injection.

\section{Related Work}

\subsection{Graph-structured Retrieval}
Early graph-structured retrievers such as GraphRAG~\cite{graphrag1} and HippoRAG~\cite{hipporag} propagate query relevance over a graph index via global community detection or Personalized PageRank~\cite{page1999pagerank}. Follow-up systems combine unstructured vector search with structured neighborhood expansion to cover dispersed evidence~\cite{lightrag}. More recent learned retrievers, including GNN-RAG~\cite{GNN-RAG}, G-Retriever~\cite{G-retrieval}, and SubgraphRAG~\cite{subgraphrag}, use graph neural networks to score nodes or subgraphs with learned structural features, while graph foundation models explore zero-shot retrieval across graph domains~\cite{gfm-rag}. However, these methods generally treat the constructed KG as a dependable retrieval index, and thus do not explicitly guard hop-wise retrieval against builder-induced spurious noise. As a result, structurally plausible but unsupported triples can be amplified during graph matching or propagation, leading to retrieval drift under imperfect KGs.

\subsection{Iterative Multi-hop Retrieval}
To manage retrieval complexity, this line of work refines the search space through iterative selection and controlled expansion. Methods such as ToG~\cite{tog}, ToG2~\cite{tog2}, KAG~\cite{kag}, and PathRAG~\cite{pathrag} formulate retrieval as sequential path construction, using LLM-driven decisions or flow-based heuristics to prune branches and retain salient trajectories. Some variants further incorporate symbolic query solvers or schema-guided planning to align intermediate steps with graph structure~\cite{youturag}. Similarly, NeuroPath~\cite{neuropath} and HippoRAG~\cite{hipporag} restrict the retrieval scope based on intermediate generations, using partial outputs to steer later queries. To control computational costs, flow-based allocation and adaptive expansion thresholds balance search breadth and depth~\cite{pathrag}. However, these methods usually assume each step can continue through available graph structure without verifying whether current evidence is sufficient, which may force retrieval along weak connections when bridge edges or qualifiers are missing and cause retrieval hallucination under incomplete KGs.

\section{Methodology}
\subsection{Overview}
Figure~\ref{fig:framework} illustrates the workflow of CS-RAG. Given a query and a constructed KG, CS-RAG first builds an ordered constraint plan with explicit relation semantics and intermediate variables, and then performs constraint-based retrieval for executable constraints. After that, it uses sufficiency-guided propagation and recovery to decide whether resolved bindings should ground later constraints or unresolved hops should trigger textual recovery. The resulting evidence is packed into constraint-aligned context for LLM generation.

\begin{figure*}
    \centering
    \includegraphics[width=0.97\linewidth]{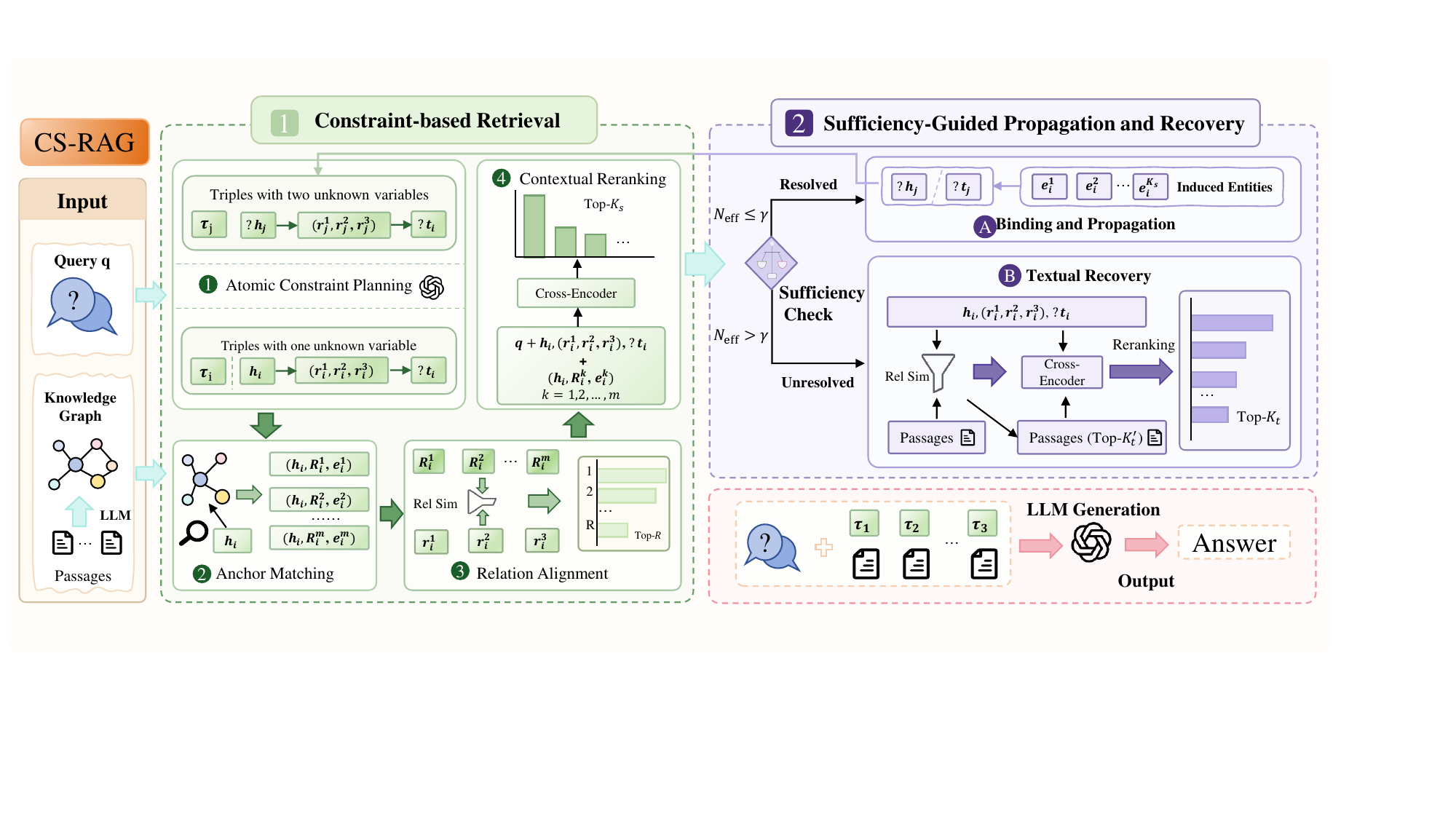} 
    \caption{The Overview of CS-RAG's workflow.}
    \label{fig:framework}
\end{figure*}

\subsection{Constraint-based Retrieval}
We first convert a multi-hop query into an ordered constraint plan with variables and relation variants, and retrieve structural candidates in a constraint-based manner.
Each constraint is grounded by anchoring on observed entities, restricting the search to a localized one-hop neighborhood, and reranking candidates with relation-level and contextual signals.

\paragraph{Atomic Constraint Planning.}
Given a query $q$, we use a LLM-based planner to produce a structured plan
$\mathcal{P}=\langle \tau_1,\tau_2,\ldots,\tau_k\rangle$,
where each atomic constraint is represented as a triple pattern
$\tau_i=(h_i,r_i,t_i)$.
The head and tail are drawn from a unified symbol space consisting of constant entities extracted from $q$ and at least one unknown variable.
We denote the entity set by $\mathcal{E}_q$ and the variable set by $\mathcal{V}=\{?v_1,?v_2,\ldots\}$. Each variable is defined with a “\textit{?}” placeholder and a type specification, which denote an intermediate entity slot and potentially shared across multiple constraints.

For each $\tau_i$, the planner also generates a set of $m$ relation variants
$\mathcal{R}_i=\{r_i^{(1)},\ldots,r_i^{(m)}\}$.
This set makes the intended relational semantics explicit by enumerating plausible relation expressions implied by the query, and it provides an alignment target for later retrieval.
During constraint-based retrieval, we use $\mathcal{R}_i$ to match and filter KG relations at a finer granularity.

\paragraph{Anchor Matching.}
We first focus on atomic constraints that contain a single unknown variable, since such constraints can be grounded by at least one observed constant entity.
We define these atomic constraints as
\begin{equation}
\mathcal{P}_1 = \{\tau_i=(h_i,r_i,t_i)\in\mathcal{P}\mid |\{h_i,t_i\}\cap\mathcal{V}|=1\}.
\end{equation}
For each $\tau_i\in\mathcal{P}_1$, we denote the constant endpoint as the anchor surface form $e_i^{surf}\in\mathcal{E}_q$,
and treat the other endpoint as the unique variable whose assignment will be induced by retrieval.

Let the constructed KG be $\mathcal{G}=(\mathcal{E},\mathcal{R},\mathcal{T})$,
where $\mathcal{T}\subseteq\mathcal{E}\times\mathcal{R}\times\mathcal{E}$ is the triple set.
We map $e_i^{surf}$ to anchor nodes \(\mathcal{E}^{anchor}_i \subset \mathcal{E}\) using top-3 token-coverage matching between the query entity mention and KG entity names. 
Starting from $\mathcal{E}^{anchor}_i$, we enumerate the one-hop neighborhood to form the candidate pool $\mathcal{C}_i$.
This localized enumeration restricts retrieval to neighborhoods directly connected to the anchor, while preserving candidates that can instantiate the unique variable in $\tau_i$.

\paragraph{Relation Alignment and Contextual Reranking.}
For each candidate triple $c\in\mathcal{C}_i$ associated with $\tau_i$,
we first compare its KG relation against the relation variants in $\mathcal{R}_i$ and retain the top-$R$ aligned candidates to obtain $\mathcal{C}_i^{init}$.
This step focuses retrieval on relations aligned with the intended constraint.
We then apply a cross-encoder reranker for contextual scoring.
For each $c\in\mathcal{C}_i^{init}$, we score the concatenated input of the query, the linearized constraint, and the linearized candidate triple, and keep the top-$K_s$ candidates as the refined pool $\mathcal{C}_i^{top}$.
This step yields a compact candidate pool that is aligned with the current atomic constraint, avoiding selections that are merely semantically related but violate the intended relation.

\subsection{Sufficiency-Guided Propagation and Recovery}
This stage performs a sufficiency check to decide whether to propagate structural bindings or trigger textual recovery.
A sufficiency signal is firstly computed to determine whether an atomic constraint can be grounded with adequate structural support.
Constraints that pass the check are integrated into stable variable bindings and used for subsequent propagation, while constraints that fail the check drop structural propagation and activate textual recovery.

\paragraph{Sufficiency Scoring.}
For each atomic constraint $\tau_i\in\mathcal{P}_1$, the previous step returns a refined candidate pool $\mathcal{C}_i^{top}$
with cross-encoder scores $\{z_{i,c}\}_{c\in\mathcal{C}_i^{top}}$.
We normalize these scores with softmax into a probability distribution $\{p_{i,c}\}$ over the candidates, and compute the effective number of candidates:
\begin{equation}
N_{\mathrm{eff}}(\tau_i)=\frac{1}{\sum_{c\in\mathcal{C}_i^{top}}p_{i,c}^2}.
\label{eq:neff}
\end{equation}
A smaller $N_{\mathrm{eff}}(\tau_i)$ indicates a concentrated distribution and a more sufficient match, while a larger value indicates ambiguity.
We use a threshold $\gamma$ to determine whether $\tau_i$ is sufficiently grounded by structural evidence. Appendix~\ref{app:neff} provides the theoretical justification for this criterion.
\begin{equation}
\mathrm{State}(\tau_i) =
\begin{cases}
\mathrm{Resolved},   & N_{\mathrm{eff}}(\tau_i) \le \gamma,\\
\mathrm{Unresolved}, & N_{\mathrm{eff}}(\tau_i) > \gamma.
\end{cases}
\label{eq:state}
\end{equation}
We denote the corresponding resolved and unresolved constraint sets by $\mathcal{P}_R$ and $\mathcal{P}_U$.

\paragraph{Textual Recovery for Unresolved Constraints.}
For an unresolved constraint $\tau_u\in\mathcal{P}_U$, we activate textual recovery to supplement evidence without committing uncertain structural bindings.
Given a query-specific set of passages $U(q)$, we first retrieve the top-$K_t'$ ($K_t'$=8) passages by measuring semantic similarity between the linearized constraint and each passage. 
We then rerank the retrieved passages with a cross-encoder over the query, the current constraint, and the passage text, and keep the top-$K_t$ passages as the supplemental textual evidence set $\mathcal{T}_u^{txt}$.
This recovery branch reduces propagating uncertain structural bindings.

\paragraph{Binding and Propagation for Resolved Constraints.}
Each resolved constraint $\tau_i\in\mathcal{P}_R$ contains exactly one variable $v$.
We extract its induced entity set $\mathcal{E}_i(v)$ from $\mathcal{C}_i^{top}$ by collecting the candidate entities at the variable endpoint of the refined triples.
When $v$ appears in multiple resolved constraints, we integrate their candidate sets by intersection to enforce consistency, and fall back to the union when the intersection is empty.
The resulting binding set is denoted by $\mathcal{E}_b(v)$.

The binding set $\mathcal{E}_b(v)$ provides a stabilized assignment for subsequent constraint grounding.
For each resolved constraint, we further update its structural pool to $\hat{\mathcal{C}}_i^{top}$ by retaining only candidates consistent with $\mathcal{E}_b(v)$.
After that, we collect textual evidence to provide grounded support for each resolved atomic constraint.
For each resolved constraint $\tau_i\in\mathcal{P}_R$, the evidence $\mathcal{T}_i^{txt}$ is obtained by collecting the source passages associated with the triples in the updated structural pool $\hat{\mathcal{C}}_i^{top}$.

We then propagate the bindings to constraints that contain two variables.
Let $\mathcal{P}_2$ denote the set of constraints in $\mathcal{P}$ whose head and tail are both variables.
For a two-variable constraint $\tau_j=(v_a,r_j,v_b)\in\mathcal{P}_2$, if one endpoint variable already has a binding set such as $\mathcal{E}_b(v_a)$,
we instantiate $v_a$ as an observed anchor, turning $\tau_j$ into a one-variable constraint with one grounded endpoint.
We then rerun the same pipeline from constraint-based retrieval to sufficiency check to induce candidates for $v_b$ and collect the corresponding evidence.
By performing this propagation procedure again, constraints that initially contain two variables can also be progressively grounded while preserving the same constraint-level control.

\subsection{LLM Generation}
Finally, we pack evidences into constraint-aligned blocks:
\begin{equation}
B_i=\big[\,\mathrm{Lin}(\tau_i),\ \mathcal{T}_i^{txt}\big],\qquad
\mathcal{B}_q=\{B_i\}_{i=1}^{k},
\label{eq:evidence_blocks}
\end{equation}
yielding an ordered evidence context aligned with atomic constraints.
The resulting context $\mathcal{B}_q$ is fed into the language model to produce the final answer for the multi-hop query.

\section{Experiments}

\paragraph{Datasets and Baselines.}
We evaluate CS-RAG on three standard multi-hop QA benchmarks: 2WikiMultihopQA~\cite{2wiki}, HotpotQA~\cite{hotpotqa}, and MuSiQue~\cite{Musique}.
Following HippoRAG~\cite{hipporag}, we use the same fixed subset of 1{,}000 validation instances adopted by prior work, and apply it consistently to all methods for fair comparison.
We compare CS-RAG with representative GraphRAG baselines, including LightRAG~\cite{lightrag}, HippoRAG2~\cite{hipporag2}, GFM-RAG~\cite{gfm-rag}, and NeuroPath~\cite{neuropath}. We evaluate all baselines using their released implementations and follow their official evaluation settings.
These baselines represent diverse ways of exploiting KG structure for downstream reasoning.

\paragraph{Experimental Details.}
We use all-MiniLM-L6-v2~\cite{sentence-bert} as the dense retriever and bge-reranker-v2-m3~\cite{bge-reranker-v2-m3} as the cross-encoder reranker. GPT-4o-mini~\cite{gpt-4o} is used for atomic constraint planning and final answer generation. All experiments were run with Python 3.10 on an NVIDIA GeForce RTX 4090 server, with LLM-based KG construction, planning, and answer generation performed through API calls.
We evaluate downstream QA performance using Exact Match (EM) and token-level F1, and retrieval quality using Recall@5 over gold evidence. Additional implementation details are provided in Appendix~\ref{app:detail}.


\definecolor{AvgColor}{HTML}{7B3294}
\definecolor{StdColor}{HTML}{008837}

\providecommand{\metric}[1]{\textsc{#1}}
\newcommand{\avgstd}[2]{#1\,/\,#2}
\newcommand{\bestavg}[1]{\textcolor{AvgColor}{\textbf{#1}}}
\newcommand{\secondavg}[1]{\textcolor{AvgColor}{\underline{#1}}}
\newcommand{\beststd}[1]{\textcolor{StdColor}{\textbf{#1}}}
\newcommand{\secondstd}[1]{\textcolor{StdColor}{\underline{#1}}}

\begin{table*}[h]
\centering
\scriptsize
\renewcommand{\arraystretch}{1.0}
\setlength{\tabcolsep}{4.2pt}
\caption{Robustness results across KG builders on three datasets. 
For each method, the builder rows report mean EM (\%), F1 (\%), and Recall@5 over repeated five runs under the corresponding KG builder. All baselines are evaluated by us by following their released code. The last row of each method block reports the average$\uparrow$ and standard deviation$\downarrow$ across the five builders for each metric column. Best results are \textbf{bolded} and second-best results are \underline{underlined}.}
\label{tab:main_builder}
\resizebox{0.98\textwidth}{!}{
\begin{tabular}{l ccc ccc ccc}
\toprule
\multirow{2}{*}{\textbf{Builder}} 
& \multicolumn{3}{c}{\textbf{MuSiQue}} 
& \multicolumn{3}{c}{\textbf{2Wiki}} 
& \multicolumn{3}{c}{\textbf{HotpotQA}} \\
\cmidrule(lr){2-4} \cmidrule(lr){5-7} \cmidrule(lr){8-10}
& \metric{EM} & \metric{F1} & \metric{R@5} 
& \metric{EM} & \metric{F1} & \metric{R@5} 
& \metric{EM} & \metric{F1} & \metric{R@5} \\
\midrule

\rowcolor{gray!15} \multicolumn{10}{c}{\textbf{LightRAG}} \\
Qwen-3-1.7B   & 0.0 & 2.4 & 12.4 & 2.9 & 10.4 & 37.5 & 0.0 & 4.8 & 16.8 \\
Qwen-3-14B    & 2.1 & 5.9 & 22.8 & 5.5 & 13.8 & 46.2 & 9.7 & 14.8 & 34.0 \\
Llama-3.1-70B & 2.5 & 7.4 & 29.1 & 7.4 & 16.2 & 53.5 & 12.8 & 19.3 & 45.4 \\
GPT-4o-mini   & 3.1 & 9.7 & 34.0 & 9.8 & 18.4 & 59.8 & 15.2 & 26.9 & 54.0 \\
gemini-2.5-flash-lite & 2.9 & 8.6 & 32.6 & 9.1 & 17.3 & 56.4 & 15.8 & 28.2 & 56.3 \\
\rowcolor{gray!6}
\textcolor{AvgColor}{Avg (\% $\uparrow$)} / \textcolor{StdColor}{Std ($\downarrow$)}
& \avgstd{2.1}{\beststd{1.1}}
& \avgstd{6.8}{\secondstd{2.5}}
& \avgstd{26.2}{\secondstd{7.9}}
& \avgstd{6.9}{\secondstd{2.5}}
& \avgstd{15.2}{\secondstd{2.9}}
& \avgstd{50.7}{8.0}
& \avgstd{10.7}{\secondstd{5.8}}
& \avgstd{18.8}{\secondstd{8.6}}
& \avgstd{41.3}{14.5} \\
\midrule

\rowcolor{gray!15} \multicolumn{10}{c}{\textbf{HippoRAG2}} \\
Qwen-3-1.7B   & 10.8 & 17.6 & 29.2 & 38.6 & 47.8 & 60.7 & 12.9 & 22.8 & 46.0 \\
Qwen-3-14B    & 21.5 & 31.5 & 43.8 & 44.8 & 54.0 & 67.5 & 31.2 & 46.5 & 69.2 \\
Llama-3.1-70B & 26.8 & 37.8 & 50.9 & 48.3 & 57.6 & 71.2 & 42.8 & 58.6 & 79.7 \\
GPT-4o-mini   & 30.6 & 42.2 & 56.0 & 51.5 & 60.3 & 74.9 & 50.7 & 64.7 & 82.8 \\
gemini-2.5-flash-lite & 32.1 & 43.1 & 57.4 & 50.0 & 58.6 & 72.8 & 52.3 & 66.2 & 84.0 \\
\rowcolor{gray!6}
\textcolor{AvgColor}{Avg (\% $\uparrow$)} / \textcolor{StdColor}{Std ($\downarrow$)}
& \avgstd{24.4}{7.7}
& \avgstd{34.4}{9.4}
& \avgstd{47.5}{10.3}
& \avgstd{46.6}{4.6}
& \avgstd{55.7}{4.4}
& \avgstd{69.4}{\secondstd{5.0}}
& \avgstd{38.0}{14.6}
& \avgstd{51.8}{16.1}
& \avgstd{72.3}{14.2} \\
\midrule

\rowcolor{gray!15} \multicolumn{10}{c}{\textbf{GFM-RAG}} \\
Qwen-3-1.7B   & 11.2 & 17.4 & 33.5 & 50.0 & 60.2 & 78.0 & 15.1 & 28.5 & 60.2 \\
Qwen-3-14B    & 21.4 & 29.8 & 45.6 & 57.5 & 66.8 & 84.8 & 35.4 & 52.0 & 78.6 \\
Llama-3.1-70B & 26.6 & 35.4 & 52.1 & 61.5 & 70.5 & 89.0 & 45.9 & 62.3 & 85.4 \\
GPT-4o-mini   & 30.5 & 40.9 & 57.6 & 67.8 & 76.9 & 96.0 & 51.2 & 68.0 & 87.8 \\
gemini-2.5-flash-lite & 29.4 & 38.7 & 55.6 & 65.3 & 74.0 & 91.8 & 52.6 & 70.4 & 89.2 \\
\rowcolor{gray!6}
\textcolor{AvgColor}{Avg (\% $\uparrow$)} / \textcolor{StdColor}{Std ($\downarrow$)}
& \avgstd{23.8}{7.1}
& \avgstd{32.4}{8.4}
& \avgstd{48.9}{8.7}
& \avgstd{\secondavg{60.4}}{6.3}
& \avgstd{\secondavg{69.7}}{5.8}
& \avgstd{\secondavg{87.9}}{6.2}
& \avgstd{40.0}{13.9}
& \avgstd{\secondavg{56.2}}{15.3}
& \avgstd{80.2}{10.7} \\
\midrule

\rowcolor{gray!15} \multicolumn{10}{c}{\textbf{NeuroPath}} \\
Qwen-3-1.7B   & 15.7 & 22.8 & 39.9 & 50.8 & 60.0 & 76.8 & 21.7 & 34.2 & 67.5 \\
Qwen-3-14B    & 24.7 & 34.6 & 49.2 & 57.4 & 66.6 & 83.1 & 37.2 & 50.8 & 79.4 \\
Llama-3.1-70B & 28.8 & 39.7 & 55.7 & 60.2 & 69.7 & 87.0 & 46.8 & 60.9 & 86.5 \\
GPT-4o-mini   & 31.2 & 43.5 & 62.3 & 63.4 & 73.2 & 91.8 & 51.5 & 64.9 & 89.9 \\
gemini-2.5-flash-lite & 32.5 & 44.4 & 63.6 & 61.8 & 71.0 & 89.0 & 52.2 & 66.2 & 91.1 \\
\rowcolor{gray!6}
\textcolor{AvgColor}{Avg (\% $\uparrow$)} / \textcolor{StdColor}{Std ($\downarrow$)}
& \avgstd{\secondavg{26.6}}{6.1}
& \avgstd{\secondavg{37.0}}{7.9}
& \avgstd{\secondavg{54.1}}{8.8}
& \avgstd{58.7}{4.4}
& \avgstd{68.1}{4.6}
& \avgstd{85.5}{5.2}
& \avgstd{\secondavg{41.9}}{11.4}
& \avgstd{55.4}{11.9}
& \avgstd{\secondavg{82.9}}{\secondstd{8.7}} \\
\midrule

\rowcolor{gray!15} \multicolumn{10}{c}{\textbf{CS-RAG}} \\
Qwen-3-1.7B   & 29.9 & 41.9 & 61.7 & 60.6 & 68.8 & 87.5 & 51.6 & 66.4 & 82.2 \\
Qwen-3-14B    & 31.0 & 43.7 & 63.7 & 62.0 & 70.0 & 88.4 & 52.7 & 68.0 & 84.1 \\
Llama-3.1-70B & 31.8 & 44.4 & 64.5 & 62.6 & 70.6 & 88.9 & 53.4 & 68.9 & 85.0 \\
GPT-4o-mini   & 32.2 & 45.1 & 65.0 & 63.0 & 71.0 & 89.1 & 54.3 & 69.8 & 86.9 \\
gemini-2.5-flash-lite & 33.6 & 45.6 & 65.4 & 63.8 & 71.6 & 89.6 & 54.8 & 70.6 & 87.4 \\
\rowcolor{gray!10}
\textbf{\textcolor{AvgColor}{Avg (\% $\uparrow$)} / \textcolor{StdColor}{Std ($\downarrow$)}}
& \avgstd{\bestavg{31.7}}{\secondstd{1.2}}
& \avgstd{\bestavg{44.1}}{\beststd{1.3}}
& \avgstd{\bestavg{64.1}}{\beststd{1.3}}
& \avgstd{\bestavg{62.4}}{\beststd{1.1}}
& \avgstd{\bestavg{70.4}}{\beststd{1.0}}
& \avgstd{\bestavg{88.7}}{\beststd{0.7}}
& \avgstd{\bestavg{53.4}}{\beststd{1.1}}
& \avgstd{\bestavg{68.7}}{\beststd{1.5}}
& \avgstd{\bestavg{85.1}}{\beststd{1.9}} \\
\bottomrule
\end{tabular}
}
\end{table*}

\subsection{Robustness Across KG Builders}
\label{sec:main_builder}

We evaluate robustness under KGs constructed by different builders, including Qwen-3-1.7B, Qwen-3-14B~\cite{qwen3_technical_report_2025}, Llama-3.1-70B~\cite{llama3.1}, GPT-4o-mini~\cite{gpt-4o}, and Gemini-2.5-flash-lite~\cite{gemini2.5}. Following the setting used in HippoRAG2~\cite{hipporag2}, we keep the construction pipeline and tuple-extraction prompt fixed, varying only the LLM builder. Table~\ref{tab:main_builder} reports Exact Match (EM) and token-level F1 for downstream QA, and Recall@5 for retrieval quality. For each metric, the last row of each method block further reports the average and standard deviation across the five KG builders, computed from the run-averaged scores. The average measures overall effectiveness under different builders, while the standard deviation reflects cross-builder sensitivity.

The results show that CS-RAG achieves the highest cross-builder averages across all datasets and metrics, while maintaining consistently low standard deviations. This indicates that CS-RAG is less affected by builder-induced KG variations and achieves stronger robustness. In contrast, the baselines show stronger dependence on builder quality. LightRAG sometimes obtains a relatively small standard deviation, but its averages are much lower than those of the other methods, suggesting that the small variation mainly comes from uniformly weak performance. HippoRAG2 and GFM-RAG generally yield stronger cross-builder averages than LightRAG, but their graph propagation or graph-aware scoring still depends on reliable KG connectivity. Although NeuroPath is relatively more stable due to its path-level semantic control during graph traversal, the absence of an explicit sufficiency check for structural continuation makes it less robust than CS-RAG. Moreover, we observe that builder sensitivity is weaker on 2Wiki than on MuSiQue and HotpotQA. A likely reason is that 2Wiki has a more regular evidence structure and more explicit entity-bridging patterns, which reduce the chance that builder-induced errors accumulate across repeated intermediate grounding steps. In contrast, MuSiQue and HotpotQA require more cross-hop evidence chaining, so builder-induced issues are more likely to accumulate during retrieval.

\subsection{Robustness Under Controlled KG Issues}
\label{sec:robustness}

\begin{figure*}[h]
    \centering
    \includegraphics[width=0.99\linewidth]{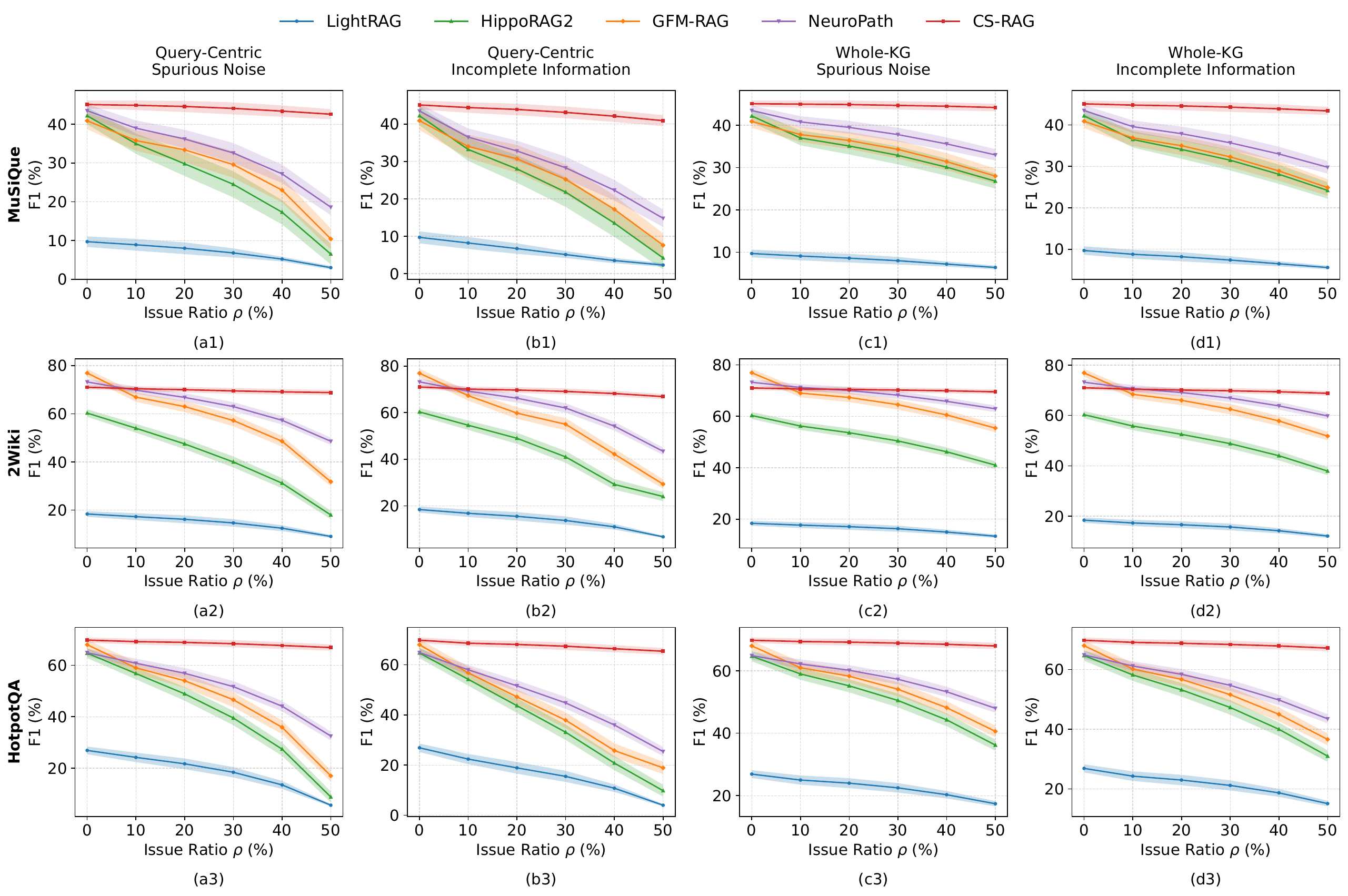} 
    \caption{F1 trends under controlled KG issue injection. Columns (a)--(d) correspond to Query-Centric Spurious Noise, Query-Centric Incomplete Information, Whole-KG Spurious Noise, and Whole-KG Incomplete Information, respectively. Rows 1--3 correspond to MuSiQue, 2Wiki, and HotpotQA, respectively. All baselines are evaluated by us by following their
released code. Lines show the mean F1, and shaded bands indicate the standard deviation over five runs.}
    \label{fig:robustness}
\end{figure*}

In this section, we evaluate robustness under controlled KG quality degradation to examine whether CS-RAG mitigates the two KG-induced challenges.  We consider two injection scopes: Query-Centric Issue Injection perturbs the local 2-hop neighborhoods around the entities, where injected issues directly affect the reasoning path, while Whole-KG Issue Injection perturbs the full query-specific KG, where issues are distributed more broadly across the graph. Within each scope, two issue modes are injected, namely spurious noise and incomplete information, with details provided in Appendix~\ref{app:issue_injection}. We use the GPT-4o-mini constructed KG as the base graph for controlled injection because it provides a relatively strong and stable starting point, allowing us to isolate the effect of each injected issue mode from naturally mixed builder errors. The issue ratio $\rho$ is varied over $\{0.1, 0.2, 0.3, 0.4, 0.5\}$. For each issue ratio, we repeat the injection process five times with different random seeds and report the mean F1 with one-standard-deviation bands in Figure~\ref{fig:robustness}.

\paragraph{Query-Centric Issue Injection} Figure~\ref{fig:robustness} (a)--(b) reports the results under query-centric injection. Across all three datasets, CS-RAG maintains nearly stable F1 as $\rho$ increases, while the baselines show clear performance degradation. Under spurious noise, baselines are more easily affected because corrupted but plausible triples can still participate in graph matching and propagation. NeuroPath is relatively more stable due to path-level selection, but its performance still drops as noisy local candidates enter the reasoning path. Under incomplete information, all baselines suffer from missing local bridges, since they tend to continue structural retrieval even when the required evidence chain is under-supported. CS-RAG remains much less affected because sufficiency checking and textual recovery reduce forced structural continuation.

\paragraph{Whole-KG Issue Injection} Figure~\ref{fig:robustness} (c)--(d) reports the results under whole-KG injection. Compared with query-centric injection, the degradation of baselines is generally more gradual, since only part of the injected issues directly intersects the reasoning path. However, their F1 still decreases as $\rho$ increases, showing that globally distributed KG issues can indirectly affect retrieval through repeated graph matching and propagation. This effect is more visible for graph-based baselines, which rely more heavily on the overall reliability of KG connectivity. CS-RAG remains nearly stable across issue ratios, indicating that constraint-based retrieval and sufficiency checking are still effective when KG issues are distributed broadly over the graph. These results show that the robustness of CS-RAG is not limited to local issue concentration, but also extends to whole-KG quality degradation.

\subsection{Ablation Study}
\label{sec:ablation}

\begin{wraptable}{r}{0.5\linewidth}
\vspace{-0.0em}
    \centering
    \scriptsize
    \setlength{\tabcolsep}{2pt}
    \caption{Ablation results of EM (\%) and F1 (\%) under different builders on HotpotQA.}
    \label{tab:ablation_main}
    \begin{tabular}{lcccccc}
        \toprule
        \multirow{2}{*}{Method} &
        \multicolumn{2}{c}{\textbf{GPT-4o-mini}} &
        \multicolumn{2}{c}{\textbf{Llama-3.1-70B}} &
        \multicolumn{2}{c}{\textbf{Qwen-3-14B}} \\
        \cmidrule(lr){2-3} \cmidrule(lr){4-5} \cmidrule(lr){6-7}
        & EM & F1 & EM & F1 & EM & F1 \\
        \midrule
        CS-RAG & 54.3 & 69.8 & 53.4 & 68.9 & 52.7 & 68.0 \\
        \midrule
        \multicolumn{7}{l}{\textit{Constraint-based Retrieval}} \\
        w/o Planner & 49.8 & 64.0 & 47.1 & 61.3 & 45.6 & 59.3 \\
        w/o Relation & 51.5 & 66.4 & 50.4 & 64.8 & 49.3 & 63.8 \\
        w/o Constraint & 51.0 & 65.8 & 50.0 & 64.3 & 48.6 & 62.7 \\
        \midrule
        \multicolumn{7}{l}{\textit{Sufficiency-Guided Propagation and Recovery}} \\
        w/o Check & 50.3 & 64.7 & 47.8 & 62.0 & 46.5 & 60.2 \\
        w/o Dist & 52.1 & 66.7 & 50.5 & 65.2 & 49.5 & 63.5 \\
        \bottomrule
    \end{tabular}
    \vspace{-0.0em}
\end{wraptable}

We conduct ablations on HotpotQA to evaluate two core parts of CS-RAG: constraint-based retrieval and sufficiency checking. For constraint-based retrieval, w/o Planner removes atomic constraint planning and uses the original query as a single semantic retrieval request, which evaluates whether planning itself is necessary. 
w/o Constraint keeps the LLM-generated atomic constraints but replaces fine-grained constraint retrieval with coarse semantic matching, which evaluates the necessity of constraint-aware execution after planning. w/o Relation removes relation alignment and reranks candidates using only entity information to assess the contribution of relation-level matching. 
For Sufficiency-Guided Propagation and Recovery, w/o Check disables the sufficiency check and always propagates structural candidates to subsequent hops, examining the value of sufficiency verification. 
w/o Dist replaces the distribution-based criterion with a fixed binary score threshold, classifying a constraint as resolved when the highest score exceeds 0.5 and unresolved otherwise. Each variant modifies one targeted design choice or module while keeping the remaining pipeline unchanged.


Table~\ref{tab:ablation_main} shows that the ablated variants fail in different ways. 
w/o Planner suffers the largest drop because treating the original question as a single semantic retrieval request removes explicit hop-wise targets and intermediate dependencies. 
w/o Constraint further shows that planning alone is insufficient, since coarse semantic matching cannot enforce fine-grained constraint execution and is more likely to admit target-misaligned evidence. 
w/o Relation causes a consistent drop across builders, indicating that relation matching helps distinguish entity-related candidates from constraint-compatible evidence. 
w/o Check forces propagation from ambiguous candidate pools, making the pipeline more vulnerable to retrieval hallucination under weaker builders. 
w/o Dist also degrades performance, suggesting that binary decisions ignore distributional ambiguity among competing candidates and are less reliable than distribution-based sufficiency scoring. Additional ablation results for other modules are provided in Appendix~\ref{appsec:ablation}.

\subsection{Hyperparameter Study}
\label{appsec:rq7_hparam}



We analyze hyperparameter sensitivity of CS-RAG in this section. 
For constraint-based retrieval, we vary the number of relation variants $m$, the relation-aligned candidate budget $R$, and the structural candidate budget $K_s$. 
For sufficiency-guided propagation and recovery, we vary the textual recovery budget $K_t$ and the sufficiency threshold $\gamma$. Table~\ref{tab:gamma_sensitivity} shows that $\gamma$ controls a dataset-dependent balance between structural propagation and textual recovery. 
\begin{wraptable}{r}{0.48\textwidth}
\vspace{-0.0em}
\centering
\scriptsize
\setlength{\tabcolsep}{2pt}
\caption{Effect of the representative $\gamma$-values from the searched range across datasets. Res. denotes the ratio of executed constraint hops classified as resolved. Time measures online retrieval latency, and Tokens denotes the retrieved evidence tokens fed into the final QA. \textbf{Bold} indicates the trade-off setting used for reporting the results in the main experiments.}
\label{tab:gamma_sensitivity}
\resizebox{\linewidth}{!}{
\begin{tabular}{llccccc}
\toprule
Dataset & $\gamma$ & Res. (\%) & EM (\%)  & F1 (\%)  & Time(s) & Tokens \\
\midrule
\multirow{5}{*}{MuSiQue}
& 1.2 & 25.6 & 34.1 & 47.9 & 8.8 & 5.4k\\
& \textbf{1.5} & \textbf{44.9} & \textbf{32.2} & \textbf{45.1} & \textbf{6.3} & \textbf{3.9k} \\
& 1.8 & 57.7 & 30.2 & 42.1 & 5.8 & 3.2k \\
& 2.1 & 69.8 & 27.4 & 39.8 & 4.2 & 2.6k \\
& 2.4 & 80.9 & 25.7 & 35.2 & 3.3 & 1.9k \\
\midrule
\multirow{5}{*}{2Wiki}
& 1.2 & 24.2 & 61.9 & 66.2 & 3.0 & 1.9k \\
& 1.4 & 41.0 & 62.8 & 69.1 & 2.3 & 1.4k \\
& \textbf{1.6} & \textbf{45.8} & \textbf{63.0} & \textbf{71.0} & \textbf{1.7} & \textbf{1.1k} \\
& 1.8 & 79.1 & 63.6 & 70.4 & 1.4 & 0.9k \\
& 2.0 & 86.7 & 60.8 & 68.5 & 1.3 & 0.8k \\
\midrule
\multirow{5}{*}{HotpotQA}
& 1.2 & 20.4 & 57.7 & 72.1 & 3.2 & 2.0k \\
& 1.6 & 44.1 & 55.6 & 70.1 & 2.6 & 1.5k \\
& \textbf{2.0} & \textbf{61.5} & \textbf{54.3} & \textbf{69.8} & \textbf{2.2} & \textbf{1.2k} \\
& 2.4 & 74.6 & 51.9 & 66.7 & 1.9 & 1.0k \\
& 2.8 & 83.4 & 47.6 & 62.4 & 1.7 & 0.8k \\
\bottomrule
\end{tabular}
}
\vspace{-0.8em}
\end{wraptable}
As $\gamma$ increases, the resolved-hop rate rises and both online retrieval time and inference tokens decrease, because more constraints are allowed to continue through structural propagation rather than being routed to textual recovery. However, the QA trend differs across datasets. 
On MuSiQue, smaller $\gamma$ yields stronger EM and F1, but it also introduces substantially higher latency and more final-context tokens. This pattern suggests that MuSiQue benefits from routing uncertain hops to textual recovery and retaining additional evidence, while its longer constraint chains and richer variable dependencies make the induced overhead more pronounced. HotpotQA follows a similar but milder accuracy-efficiency trade-off, where smaller \(\gamma\) improves QA performance at a higher online cost, but the gain is less pronounced than on MuSiQue. For 2Wiki, performance improves up to a moderate threshold and then decreases when $\gamma$ becomes larger. 
This is consistent with its shorter reasoning structure and larger proportion of directly executable hops, where overly permissive structural propagation may accept weak or ambiguous bindings rather than providing useful additional evidence. 
These results indicate that $\gamma$ should be selected according to the dataset-specific trade-off among structural propagation, textual recovery, and online cost. 
The other results for $m$, $R$, $K_s$, and $K_t$ are provided in Appendix~\ref{appsec: Hyperparameter_detail}.

\subsection{Failure Analysis.}

We further analyze two representative failure cases in Appendix~\ref{app:failure_analysis}, corresponding to planner errors and insufficient evidence. 
In the first case, the planner omits an intermediate bridge variable, causing the generated constraints to retrieve evidence around the wrong entity. 
This shows that planner quality remains important for CS-RAG, since an incorrect plan can specify the wrong retrieval target and affect subsequent evidence acquisition. 
Nevertheless, the statistics in Appendix~\ref{app:planner_error_stats} show low error rates at both the hop and plan levels, indicating that GPT-4o-mini has enough ability to provide sufficiently reliable plans for most queries. 
In the second case, CS-RAG fails when the required bridge evidence is absent or too implicit in both the KG and the recovered text pool. 
This indicates that when both the KG and recovered passages mention only related entities without the required relation, further improvement depends on recovering the missing bridge evidence itself.


\section{Conclusion}

In this paper, we first analyze KG construction outcomes across different LLM-based builders and identify two recurring issue modes, spurious noise and incomplete information. Then we propose CS-RAG, a robust GraphRAG framework for multi-hop retrieval on imperfect LLM-constructed KGs. CS-RAG mitigates two issue-induced retrieval failures: retrieval drift caused by spurious noise, and retrieval hallucination caused by incomplete information. Specifically, constraint-based retrieval grounds each hop in fine-grained anchor and relation constraints to reduce drift toward structurally plausible but unsupported evidence, while sufficiency-guided propagation and textual recovery reduce the chance that incomplete KGs force hallucinated structural continuations. Experiments on three multi-hop QA benchmarks show that CS-RAG achieves stronger robustness across different KG builders and under controlled KG issue injection.

\bibliography{mybib}
\bibliographystyle{plainnat}

\newpage
\appendix 
\onecolumn

\section{Representative Incorrect Patterns}
\label{app:issue_patterns}

In this section, we provide a detailed description of six representative incorrect patterns observed in builder-extracted tuples.

\paragraph{Over-generalized Relation.}
This pattern refers to cases where the extracted tuple replaces a specific relation stated in the source evidence with a broader or less informative relation.
For example, the sentence states that \emph{Jeremy Renner was nominated for the Academy Award for Best Actor for his role in The Hurt Locker}, while the extracted tuple is effectively \emph{Jeremy Renner -- associated with -- Academy Award for Best Actor}. The extracted relation preserves a topical connection between the two entities, but it drops the precise nomination relation required by the evidence.

\paragraph{Mis-bound Relation.}
This pattern refers to cases where the extracted tuple links entities that appear in the same sentence, but the relation between them is not actually expressed in the source evidence.
For example, the sentence states that \emph{Karl Marx was born in Trier and studied at the University of Bonn}, while the extracted tuple becomes \emph{Trier -- related\_to -- Bonn}. Although both entities are mentioned, the sentence does not express such a direct relation.

\paragraph{Semantic Flip.}
This pattern refers to cases where the extracted tuple reverses or distorts the original semantic meaning of the source evidence.
For example, the sentence says that \emph{the report doesn't find that Drug A is associated with increased risk of stroke}, while the extracted tuple becomes \emph{Drug A -- associated with -- increased\_risk\_stroke}. The tuple remains topically related, but its meaning is opposite to that of the source sentence.

\paragraph{Missing Bridge Edge.}
This pattern refers to cases where the extracted tuples fail to provide a necessary connection between two pieces of evidence, leaving the reasoning chain structurally incomplete.
For example, the graph contains \emph{iPhone 13 -- powered\_by -- A15 Bionic} and \emph{Apple Inc. -- headquartered\_in -- Cupertino}, but misses the bridge \emph{A15 Bionic -- designed\_by -- Apple Inc.}. As a result, the intended multi-hop connection cannot be completed.

\paragraph{Dropped Qualifier.}
This pattern refers to cases where the extracted tuple preserves the main fact but omits a qualifier that is necessary for a complete or precise interpretation.
For example, the sentence states that \emph{In 2002, Elon Musk founded SpaceX}, while the extracted tuple keeps \emph{Elon Musk -- founded -- SpaceX} but drops the qualifier \emph{in 2002}. The extracted fact is only partially preserved.

\paragraph{Others.}
This pattern refers to a case set of low-frequency incorrect cases that usually have limited influence on downstream retrieval.
For example, one case is incomplete entity surface extraction, where the extracted entity name is only partially preserved, such as extracting \emph{Musk} instead of \emph{Elon Musk}. Another case is off-chain redundant tuple extraction, where the builder extracts an additional tuple that is loosely related to the source sentence but falls outside the evidence chain required by the query.  For example, from a sentence used to support \emph{No Cross, No Crown -- written during imprisonment in -- Tower of London}, the builder may additionally extract \emph{Tower of London -- located\_in -- London}. This tuple is sentence-related, but it does not contribute to the target reasoning chain.
 
\section{Justification of $N_{\mathrm{eff}}$}
\label{app:neff}

This appendix provides theoretical support for the sufficiency check.
Here $\{p_{i,c}\}_{c\in\mathcal{C}_i^{top}}$ is the normalized candidate distribution computed from cross-encoder scores, and $\mathcal{C}_i^{top}$ is the refined structural candidate pool for constraint $\tau_i$.

\subsection{Properties of $N_{\mathrm{eff}}$}

Let $p$ be any probability distribution on a finite candidate set $\mathcal{C}$ with $n=|\mathcal{C}|$, and define
\begin{equation}
N_{\mathrm{eff}}(p) \triangleq \frac{1}{\sum_{c\in\mathcal{C}} p_c^2}.
\label{eq:app_neff_general}
\end{equation}
Since $\sum_{c} p_c^2=\|p\|_2^2$ increases when probability concentrates on fewer candidates, $N_{\mathrm{eff}}(p)$ acts as an effective number of competing candidates.

\begin{lemma}[Range and extremal cases]
\label{lem:range}
For any distribution $p$ on $\mathcal{C}$,
\begin{equation}
1 \le N_{\mathrm{eff}}(p) \le n.
\label{eq:app_range}
\end{equation}
Moreover, $N_{\mathrm{eff}}(p)=1$ if $p$ is a point mass, and $N_{\mathrm{eff}}(p)=n$ if $p$ is uniform.
\end{lemma}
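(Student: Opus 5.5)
The plan is to bound the collision mass $S(p)\triangleq\sum_{c\in\mathcal{C}}p_c^2$ from above and below, and then invert, since $N_{\mathrm{eff}}(p)=1/S(p)$ and $S(p)>0$.

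\textbf{Upper bound on $S(p)$.} Each $p_c\in[0,1]$, so $p_c^2\le p_c$. Summing gives
\[
S(p)\le \sum_c p_c=1,
\]
hence $N_{\mathrm{eff}}(p)\ge 1$. Equality forces $p_c^2=p_c$ for every $c$, so each $p_c\in\{0,1\}$. Normalization then makes $p$ a point mass. Conversely, a point mass has $S(p)=1$, so $N_{\mathrm{eff}}(p)=1$.

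\textbf{Lower bound on $S(p)$.} Apply Cauchy--Schwarz to the vectors $p$ and $\mathbf{1}\in\mathbb{R}^n$:
\[
1=\Bigl(\sum_c p_c\cdot 1\Bigr)^2\le n\sum_c p_c^2=n\,S(p).
\]
Hence $S(p)\ge 1/n$ and $N_{\mathrm{eff}}(p)\le n$. Equality in Cauchy--Schwarz holds exactly when $p$ is proportional to $\mathbf{1}$, that is, when $p$ is uniform with $p_c=1/n$. Substituting directly, the uniform distribution gives $S(p)=n\cdot n^{-2}=1/n$, so $N_{\mathrm{eff}}(p)=n$.

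\textbf{Where care is needed.} There is no real obstacle here. The only thing to check is that the reciprocal is well defined: $S(p)>0$ because some $p_c>0$. Positivity also ensures that inverting the bounds on $S(p)$ reverses the inequalities correctly. For the cross-encoder softmax distribution used in \eqref{eq:neff}, every $p_{i,c}>0$, so $N_{\mathrm{eff}}(\tau_i)$ is always finite and lies in $[1,|\mathcal{C}_i^{top}|]$. Consequently, the threshold $\gamma$ in \eqref{eq:state} is only meaningful in the range $[1,K_s]$.
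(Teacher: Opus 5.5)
Your proof is correct and follows the paper's argument essentially verbatim: summing $p_c^2\le p_c$ over $\mathcal{C}$ gives $N_{\mathrm{eff}}(p)\ge 1$, Cauchy--Schwarz against the all-ones vector gives $N_{\mathrm{eff}}(p)\le n$, and the extremal values follow by direct substitution. Your additional equality characterizations, showing that point masses and the uniform distribution are the \emph{only} extremal cases, go slightly beyond what the lemma asserts and are also valid.
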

\begin{proof}
Since $\sum_c p_c^2 \le \sum_c p_c = 1$, we have $N_{\mathrm{eff}}(p)\ge 1$.
For the upper bound, Cauchy--Schwarz gives
\[
\Big(\sum_{c\in\mathcal{C}} p_c\Big)^2 \le n \sum_{c\in\mathcal{C}} p_c^2
\ \Rightarrow\
1 \le n \sum_c p_c^2
\ \Rightarrow\
N_{\mathrm{eff}}(p) \le n.
\]
The extremal cases follow by direct substitution.
\end{proof}

\subsection{Threshold interpretation for sufficiency check}

We next show that the sufficiency check $N_{\mathrm{eff}}(p)\le\gamma$ implies the existence of a dominant candidate.

\begin{lemma}[Lower bound on the top probability]
\label{lem:pmax}
Let $p_{\max}=\max_{c\in\mathcal{C}} p_c$. For any distribution $p$ on $\mathcal{C}$,
\begin{equation}
p_{\max} \ge \sum_{c\in\mathcal{C}} p_c^2 = \frac{1}{N_{\mathrm{eff}}(p)}.
\label{eq:app_pmax_lb}
\end{equation}
\end{lemma}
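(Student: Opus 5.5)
The plan is to view $\sum_c p_c^2$ as an expectation of $p_c$ under the distribution $p$ itself. Since every $p_c$ is at most $p_{\max}$, that expectation cannot exceed $p_{\max}$. Concretely, I will bound each term $p_c^2 = p_c\cdot p_c \le p_c\, p_{\max}$, which holds because $p_c\ge 0$ and $p_c\le p_{\max}$ for every $c\in\mathcal{C}$. Summing over $c$ then gives
\[
\sum_{c\in\mathcal{C}} p_c^2 \;\le\; p_{\max}\sum_{c\in\mathcal{C}} p_c \;=\; p_{\max},
\]
using only the normalization $\sum_c p_c = 1$.

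The remaining step is the identity $\sum_c p_c^2 = 1/N_{\mathrm{eff}}(p)$. This is just the definition in \eqref{eq:app_neff_general}, rearranged. It is legitimate because $\sum_c p_c^2>0$: at least one $p_c$ is positive, which Lemma~\ref{lem:range} already implicitly uses when it gives $N_{\mathrm{eff}}(p)\ge 1$ finite. Chaining the two displays yields \eqref{eq:app_pmax_lb}.

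No step here is a real obstacle. The only point needing care is to use nonnegativity of the $p_c$ when multiplying the inequality $p_c\le p_{\max}$ by $p_c$. I will also remark when equality holds: $p_c=p_{\max}$ on the support of $p$, i.e.\ $p$ is uniform on its support. This matches the extremal cases in Lemma~\ref{lem:range}.

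Finally, I will note the consequence used by the sufficiency check \eqref{eq:state}. If $N_{\mathrm{eff}}(\tau_i)\le\gamma$, then $p_{\max}\ge 1/\gamma$. In particular, for $\gamma<2$ the top candidate carries more than half the mass and therefore strictly dominates all competitors combined.
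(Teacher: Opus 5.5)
Your proof is correct and follows essentially the same argument as the paper. You bound $p_c^2 \le p_{\max} p_c$ termwise using $0 \le p_c \le p_{\max}$, sum using $\sum_c p_c = 1$, and read off $\sum_c p_c^2 = 1/N_{\mathrm{eff}}(p)$ from the definition; your side remarks on the equality case and on $p_{\max} > 1/2$ when $\gamma < 2$ (a direct consequence of Proposition~\ref{prop:sufficiency}) are also correct.
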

\begin{proof}
For every $c$, $p_c \le p_{\max}$ implies $p_c^2 \le p_{\max}p_c$.
Summing over $c$ yields
\[
\sum_{c\in\mathcal{C}} p_c^2 \le p_{\max}\sum_{c\in\mathcal{C}} p_c = p_{\max}.
\]
\end{proof}

\begin{proposition}[Sufficiency check guarantees a dominant candidate]
\label{prop:sufficiency}
Consider a constraint $\tau_i$ with candidate pool $\mathcal{C}_i^{top}$ and distribution $\{p_{i,c}\}$.
If the sufficiency check accepts $\tau_i$, i.e., $N_{\mathrm{eff}}(\tau_i)\le \gamma$, then
\begin{equation}
\max_{c\in\mathcal{C}_i^{top}} p_{i,c} \ge \frac{1}{\gamma}.
\label{eq:app_accept_lb}
\end{equation}
\end{proposition}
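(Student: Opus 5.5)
The plan is to obtain the proposition directly from Lemma~\ref{lem:pmax}, specialized to the candidate pool of the accepted constraint. Take $\mathcal{C}=\mathcal{C}_i^{top}$ and $p=\{p_{i,c}\}_{c\in\mathcal{C}_i^{top}}$. This is a genuine probability distribution on a finite set, because it is the softmax of the finitely many cross-encoder scores $z_{i,c}$. Its entries are strictly positive and sum to one. The pool is nonempty whenever the constraint is scored at all, so $N_{\mathrm{eff}}(\tau_i)=N_{\mathrm{eff}}(p)$ is well defined and finite. By Lemma~\ref{lem:range} it lies in $[1,|\mathcal{C}_i^{top}|]$.

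Lemma~\ref{lem:pmax} then gives
\[
\max_{c\in\mathcal{C}_i^{top}} p_{i,c}\;\ge\;\frac{1}{N_{\mathrm{eff}}(\tau_i)}.
\]
Acceptance by the check in \eqref{eq:state} means $N_{\mathrm{eff}}(\tau_i)\le\gamma$. Both $N_{\mathrm{eff}}(\tau_i)$ and $\gamma$ are positive: the first is at least $1$, and the second is positive in any meaningful setting, since $\gamma<1$ would reject every constraint. The map $x\mapsto 1/x$ is decreasing on $(0,\infty)$, so $1/N_{\mathrm{eff}}(\tau_i)\ge 1/\gamma$. Chaining the two inequalities yields \eqref{eq:app_accept_lb}.

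No step is a real obstacle. The only care needed is to confirm the hypotheses of Lemma~\ref{lem:pmax}, namely that the softmax output is a distribution on a finite nonempty set, and that $\gamma>0$ so the reciprocal step preserves the inequality direction. A short remark could follow on interpretation. For $\gamma<2$ the bound gives $p_{\max}>1/2$, so the top candidate strictly outweighs all others combined and is unique. For larger $\gamma$ the bound only certifies a dominant mass of at least $1/\gamma$.
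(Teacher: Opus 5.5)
Your proof is correct and matches the paper's: Lemma~\ref{lem:pmax} gives $\max_c p_{i,c}\ge 1/N_{\mathrm{eff}}(\tau_i)$, and then $N_{\mathrm{eff}}(\tau_i)\le\gamma$ together with the fact that $x\mapsto 1/x$ is decreasing gives the bound. You can tighten the positivity remark, because the hypothesis already forces $\gamma\ge N_{\mathrm{eff}}(\tau_i)\ge 1$, so you do not need to appeal to a ``meaningful setting.''
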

\begin{proof}
By Lemma~\ref{lem:pmax}, $\max_c p_{i,c} \ge 1/N_{\mathrm{eff}}(\tau_i)$.
If $N_{\mathrm{eff}}(\tau_i)\le\gamma$, then $1/N_{\mathrm{eff}}(\tau_i)\ge 1/\gamma$, yielding Eq.~\ref{eq:app_accept_lb}.
\end{proof}

Proposition~\ref{prop:sufficiency} explains why $N_{\mathrm{eff}}$ is a suitable score for sufficiency check in CS-RAG.
When $N_{\mathrm{eff}}$ is large, the candidate distribution lacks a dominant option and any structural binding is inherently ambiguous.
Applying the threshold $N_{\mathrm{eff}}(\tau_i)\le\gamma$ therefore suppresses hallucinated bindings by allowing structural propagation only when the distribution admits a clear winner.

\section{Details of Datasets and Baselines.}
\label{app:detail}
\paragraph{Datasets.}
We evaluate on three Wikipedia-based multi-hop QA benchmarks: \textbf{2WikiMultihopQA (2Wiki)}~\cite{2wiki}, which contains multi-hop queries paired with answers and annotated supporting evidence spanning multiple pages; \textbf{HotpotQA}~\cite{hotpotqa}, which provides multi-hop queries with gold supporting facts (evidence sentences) identified under Wikipedia titles; and \textbf{MuSiQue}~\cite{Musique}, a compositional multi-hop QA dataset (typically 2--4 hops) that includes queries, answers, and labeled supporting paragraphs within candidate contexts. The corresponding query-specific passage pools contain 10,000 passages for 2WikiMultihopQA, 10,000 passages for HotpotQA, and 20,000 passages for MuSiQue.

\paragraph{Baselines.}

We briefly summarize the compared baselines and their main design goals.

\begin{itemize}

\item \textbf{LightRAG} is a lightweight graph-based retrieval framework that simplifies graph construction and indexing while preserving structural retrieval over connected evidence. It is designed to reduce the practical overhead of graph-enhanced RAG systems.

\item  \textbf{HippoRAG2} further extends the memory-inspired graph retrieval paradigm introduced in HippoRAG with a stronger memory formulation, and improved retrieval mechanism for more effective evidence access in multi-hop reasoning.

\item \textbf{GFM-RAG} incorporates graph foundation modeling into the retrieval process to obtain stronger graph-aware representations for evidence selection. It is proposed to improve graph-based retrieval by replacing shallow structural heuristics with learned graph semantic modeling.

\item \textbf{NeuroPath} formulates retrieval as path tracking over the graph and emphasizes semantically coherent reasoning trajectories. It is designed to improve multi-hop retrieval by preserving informative reasoning paths during graph traversal.
\end{itemize}

\section{Details of Experiments.}
\paragraph{Offline KG construction.}
For KG construction, we follow the query-specific offline construction setting used in HippoRAG2~\cite{hipporag2}. Given an input query, we first obtain a bounded candidate passage set $U(q)$ from the retrieval corpus. We then use an LLM-based KG builder to extract relational tuples $(h,r,t)$ from the passages in $U(q)$ and construct a query-specific KG. Each extracted tuple is attached with provenance pointers to its source passage. This KG construction step is conducted offline as preprocessing. In the cross-builder evaluation, we keep the construction pipeline and tuple-extraction prompt fixed.

\paragraph{Retrieval metric.}
We compute Recall@5 by projecting retrieved items into a unified evidence list. 
Specifically, we aggregate retrieved evidence from all atomic constraints, deduplicate candidates by passage identifier, and rank them by a global score, where repeated passages take the maximum reranker score across constraints. 
Recall@5 is computed as the fraction of gold evidence passages covered by the top-5 ranked passages, averaged over all queries.

\paragraph{Hyperparameter details.}
Table~\ref{tab:used_hyperparams} summarizes the candidate ranges and the settings used of important hyperparameters in the main experiments. 
We select hyperparameters under an accuracy-efficiency trade-off rather than optimizing QA performance alone. 

\begin{table}[h]
\centering
\small
\setlength{\tabcolsep}{5pt}
\caption{Hyperparameter settings used for reporting the main experimental results.}
\label{tab:used_hyperparams}
\begin{tabular}{llccc}
\toprule
\multirow{2}{*}{Parameter} 
& \multirow{2}{*}{Candidate range} 
& \multicolumn{3}{c}{Used setting} \\
\cmidrule(lr){3-5}
& & MuSiQue & 2Wiki & HotpotQA \\
\midrule
\multicolumn{5}{l}{\textit{Constraint-based Retrieval}} \\
Number of relation variants \(m\) 
& \(\{1,2,3,4,5\}\) 
& \(4\) & \(3\) & \(3\) \\
Relation-filtered candidates \(R\) 
& \(\{6,8,10,12,14\}\) 
& \(10\) & \(8\) & \(8\) \\
Structural candidates after reranking \(K_s\) 
& \(\{1,3,5,7,9\}\) 
& \(7\) & \(5\) & \(5\) \\
\midrule
\multicolumn{5}{l}{\textit{Sufficiency-Guided Propagation and Recovery}} \\
Textual passages after reranking \(K_t\) 
& \(\{1,3,5,7,9\}\) 
& \(5\) & \(5\) & \(5\) \\
Sufficiency threshold \(\gamma\) 
& \([1.0,3.0]\), step \(0.1\) 
& \(1.5\) & \(1.6\) & \(2.0\) \\
\bottomrule
\end{tabular}
\end{table}

\subsection{Details of Controlled KG Issue Injection}
\label{app:issue_injection}

We provide the construction details for the controlled KG issue injection used in Section~\ref{sec:robustness}. All injected KGs are derived from the GPT-4o-mini constructed KG. 

\paragraph{Injection scopes.}
We consider two injection scopes.

\begin{itemize}
    \item \textbf{Query-Centric Issue Injection.}
    For each query, we first use the atomic constraints to identify the observed constant entities mentioned in the query plan. These entities are matched to KG nodes, and their 2-hop neighborhood is collected as the query-relevant local subgraph. Issues are injected only within this local subgraph. 

    \item \textbf{Whole-KG Issue Injection.}
    We use the full query-specific KG as the eligible injection scope and inject issues over the entire graph. 
\end{itemize}

\paragraph{Issue modes.}
Within each injection scope, we consider two issue modes, spurious noise and incomplete information. 
Each mode is instantiated by issue patterns that correspond to the representative incorrect patterns in Appendix~\ref{app:issue_patterns}. 
For each injection location in the selected scope, we inject an issue with probability $\rho \in \{0.1,0.2,0.3,0.4,0.5\}$ and uniformly sample one pattern from the corresponding issue mode to determine the injection rule.

\begin{itemize}
    \item \textbf{Spurious noise.}
    This mode introduces structurally plausible but semantically unreliable evidence, corresponding to retrieval drift. 
    It contains three issue patterns.

    \begin{itemize}
        \item \textbf{Over-generalized Relation.}
        For a selected triple $(h,r,t)$, we keep the head $h$ and tail $t$ unchanged, and replace the original relation $r$ with one generic relation sampled from the predefined set:
        \[
        \{\text{``related to''}, \text{``associated with''}, \text{``connected to''}, \text{``linked with''}\}.
        \]
        This operation preserves entity connectivity but removes the fine-grained relation semantics required by the query constraint. A representative injected example is shown below.

        \begin{center}
        \small
        \begin{tabular}{@{}p{0.18\linewidth}p{0.7\linewidth}@{}}
        \toprule
        Original Triple & \small
        Apollo 13 -- nominated for -- Academy Award for Best Picture \\
        After Injection & \small
        Apollo 13 -- associated with -- Academy Award for Best Picture \\
        \bottomrule
        \end{tabular}
        \end{center}

        \item \textbf{Mis-bound Relation.}
        For a selected triple $(h,r,t)$, we keep the relation $r$ unchanged and replace one endpoint. 
        The replacement entity is sampled from the entity pool of the current injection scope. 
        If entity type information is available, the replacement is restricted to entities with the same type as the original endpoint. 
        The sampled replacement excludes the original endpoint and the other endpoint of the same triple. 
        This operation creates a triple that keeps a plausible relation surface but binds it to an unsupported entity. A representative injected example is shown below.

        \begin{center}
        \begin{tabular}{ll}
        \toprule
        Original Triple & \small
        Marie Curie -- born in -- Warsaw \\
        After Injection & \small
        Marie Curie -- born in -- Paris \\
        \bottomrule
        \end{tabular}
        \end{center}

        \item \textbf{Semantic Flip.}
        For a selected triple $(h,r,t)$, we keep the head $h$ and tail $t$ unchanged, and replace the relation $r$ with an opposite or semantically incompatible relation. 
        The replacement is obtained from a predefined phrase-level mapping set such as:
        \[
        \begin{aligned}
        &\text{born} \leftrightarrow \text{died},\quad
        \text{start} \leftrightarrow \text{end},\quad
        \text{before} \leftrightarrow \text{after},\\
        &\text{parent} \leftrightarrow \text{child},\quad
        \text{father} \leftrightarrow \text{child},\quad
        \text{mother} \leftrightarrow \text{child},\\
        &\text{spouse} \leftrightarrow \text{sibling},\quad
        \text{winner} \leftrightarrow \text{loser},\quad
        \text{win} \leftrightarrow \text{lose},\\
        &\text{contain} \leftrightarrow \text{located in},\quad
        \text{employer} \leftrightarrow \text{employee},\\
        &\text{member} \leftrightarrow \text{opponent},\quad
        \text{director} \leftrightarrow \text{actor}.
        \end{aligned}
        \]
        If a relation contains one phrase in the mapping set, the first matched phrase is replaced by its paired phrase. 
        This operation keeps the triple topically related to the original evidence but changes the intended semantics. A representative injected example is shown below.

        \begin{center}
        \begin{tabular}{ll}
        \toprule
        Original Triple & \small Mary Shelley -- mother of -- Percy Florence Shelley\\
        After Injection & \small Mary Shelley -- child of -- Percy Florence Shelley \\
        \bottomrule
        \end{tabular}
        \end{center}
    \end{itemize}

    \item \textbf{Incomplete information.}
    This mode removes structural or qualifier information required for evidence chaining, corresponding to retrieval hallucination. 
    It contains two issue patterns.

    \begin{itemize}
        \item \textbf{Missing Bridge Edge.}
        We remove selected triples from the target scope. 
        In the query-centric setting, a triple is prioritized as a bridge candidate if it satisfies at least one of the two structural conditions used in the injection procedure. 
        First, its two endpoints are located at different depths in the local 2-hop neighborhood. 
        Second, its relation matches one of the relation variants generated during atomic constraint planning, where exact match and containment-based fuzzy match are both allowed. 
        If no prioritized bridge candidate is found in the target scope, the deletion is sampled from the eligible triples in the same scope. 
        This operation removes links that support cross-hop propagation. A representative injected example is shown below.

         \begin{center}
        \small
        \begin{tabular}{@{}p{0.18\linewidth}p{0.7\linewidth}@{}}
        \toprule
       Original Triple & \small The Old Man and the Sea -- written by -- Ernest Hemingway \\
        After Injection & \small the selected bridge triple is removed from the KG \\
        \bottomrule
        \end{tabular}
        \end{center}

        \item \textbf{Dropped Qualifier.}
        For a selected triple, we remove qualifier fields and shorten qualifier-bearing evidence text. 
        The removable qualifier fields are:
        \[
        \{\text{qualifier}, \text{qualifiers}, \text{time}, \text{date}, \text{year}, 
        \text{start\_time}, \text{end\_time}, \text{point\_in\_time}, \text{location}\}.
        \]
        Specifically, for the evidence text attached to the triple, we remove parenthetical clauses, bracketed clauses, four-digit years in the ranges 1000--1999 and 2000--2099, month names from January to December, and trailing comma clauses. 
        This operation keeps the main triple but removes information needed for precise temporal, spatial, or conditional reasoning. A representative injected example is shown below.

        \begin{center}
        \begin{tabular}{ll}
        \toprule
        Original Triple & \small Kathryn Bigelow -- won -- 2010 Academy Award for Best Director \\
        After Injection & \small Kathryn Bigelow -- won -- Academy Award for Best Director \\
        \bottomrule
        \end{tabular}
        \end{center}
    \end{itemize}
\end{itemize}

\section{Supplementary Results}
\label{Supplementary Results}

\subsection{Supplementary of Ablation Study}
\label{appsec:ablation}

We further ablate several design choices in binding propagation and reranking on HotpotQA. 
w/o Intersection replaces intersection-based binding consolidation with union. 
w/o Typed-Var removes type information in variables. 
w/o Rerank (Con) removes cross-encoder reranking in constraint-based retrieval. 
w/o Rerank (Txt) removes cross-encoder reranking in textual recovery. 

\begin{table*}[h]
    \centering
    \small
    \setlength{\tabcolsep}{3pt}
    \caption{Supplementary ablation results of EM (\%) and F1 (\%) under different builders on HotpotQA.}
    \label{tab:ablation_supp}
    \resizebox{0.6\textwidth}{!}{
    \begin{tabular}{lcccccc}
        \toprule
        \multirow{2}{*}{Method} &
        \multicolumn{2}{c}{\textbf{GPT-4o-mini}} &
        \multicolumn{2}{c}{\textbf{Llama-3.1-70B}} &
        \multicolumn{2}{c}{\textbf{Qwen-3-14B}} \\
        \cmidrule(lr){2-3} \cmidrule(lr){4-5} \cmidrule(lr){6-7}
        & EM & F1 & EM & F1 & EM & F1 \\
        \midrule
        CS-RAG & 54.3 & 69.8 & 53.4 & 68.9 & 52.7 & 68.0 \\
        \midrule
        w/o Intersection & 52.5 & 67.4 & 51.2 & 66.0 & 50.3 & 64.7 \\
        w/o Typed-Var & 53.0 & 68.2 & 51.9 & 67.0 & 51.0 & 65.8 \\
        w/o Rerank (Con) & 51.8 & 66.6 & 50.2 & 64.7 & 49.0 & 63.2 \\
        w/o Rerank (Txt) & 51.3 & 65.9 & 49.5 & 63.8 & 48.2 & 61.9 \\
        \bottomrule
    \end{tabular}
    }
\end{table*}

Table~\ref{tab:ablation_supp} shows that binding propagation and reranking affect robustness through different parts of the retrieval process. Removing intersection-based consolidation weakens cross-hop consistency, since bindings from different constraints are no longer forced to agree before propagation. Removing type information in variables causes a smaller but consistent drop, suggesting that type information mainly helps prevent variables with similar surface semantics from being mixed during intermediate binding.

The reranking variants expose a stronger failure pattern. Without contextual reranking in constraint-based retrieval, structurally adjacent but semantically weak triples are more likely to remain in the candidate pool. Without reranking in textual recovery, the fallback branch becomes less selective and may pass weakly relevant passages into the final evidence context. The larger drops of these two variants indicate that reranking is not only an accuracy component, but also a stabilizer when KG quality varies across builders.

\subsection{Supplementary of Hyperparameter details.}
\label{appsec: Hyperparameter_detail}

\begin{figure*}[h]
    \centering
    \includegraphics[width=1.0\linewidth]{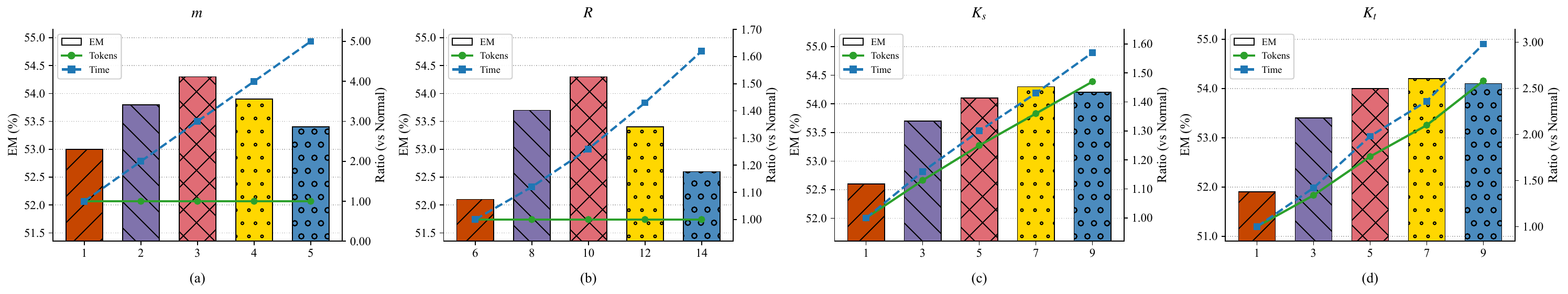}
   \caption{Hyperparameter sensitivity on HotpotQA with KGs constructed by GPT-4o-mini. We vary (a) the number of relation variants $m$, (b) the relation-aligned candidate budget $R$, (c) the structural candidate budget $K_s$, and (d) the textual recovery budget $K_t$, and report EM, normalized retrieval time, and normalized inference tokens.}
    \label{fig:hparam}
\end{figure*}

Figure~\ref{fig:hparam} reports the results of $m$, $R$, $K_s$, and $K_t$ on HotpotQA with KGs constructed by GPT-4o-mini, where online retrieval time and inference tokens are normalized by the smallest setting of each hyperparameter. The results show a common accuracy-efficiency trade-off across these hyperparameters, but they affect different parts of the pipeline. 
For $m$, moderate values perform better because too few relation variants may miss valid structural matches, while too many expand the matching space and introduce loosely related candidates. The $R$ follows a related accuracy and efficiency pattern, where increasing the relation-aligned candidate budget lowers EM while increasing retrieval time, suggesting that retaining more candidates before reranking can introduce additional compatible but irrelevant candidates into contextual reranking. 
For both $m$ and $R$, the token ratio remains unchanged because they affect candidate search and reranking before final evidence packing, rather than changing the number of evidence items fed into the final QA model. For $K_s$, preserving more structural candidates improves EM at first, but the gain quickly saturates once the main constraint-compatible evidence is covered. For $K_t$, larger textual recovery budgets increase the token ratio more clearly, since recovered passages are directly included in the final QA model. 
These trends suggest that the parameters in both retrieval and recovery should be set to moderate values, as larger budgets mainly increase online overhead after the useful evidence has been covered.

\subsection{Constraint Statistics}
\label{appsec:decomp_stats}

We summarize constraint statistics to examine whether CS-RAG represents multi-hop queries through explicit variable dependencies rather than shallow one-hop expansions.
For each dataset, we report full-query statistics, including the average numbers of atomic constraints and variables, as well as the proportions of one-variable and two-variable constraints.
Table~\ref{tab:decomp_stats} shows the results across datasets.
MuSiQue has the largest average number of constraints, reflecting its more compositional query structure and the need to organize reasoning into longer constraint chains.
Its average number of variables is also high, indicating that these chains often involve intermediate entities that must be grounded across hops.
2Wiki has fewer constraints on average, but its average number of variables is comparable to MuSiQue and even slightly higher. This suggests that 2Wiki queries tend to form more compact but variable-dense constraint plans, where multiple entities or values must be resolved within a shorter reasoning structure.
In contrast, HotpotQA has a lower average number of variables and a higher proportion of one-variable constraints, suggesting that more constraints can be directly executed from observed anchors, with fewer variable-variable dependencies.
We also observe a small number of constraints whose head and tail are both observed entities.
For these cases, we try to perform direct textual retrieval and use the retrieved evidence as auxiliary support. However, their impact on the final results is negligible, because most of them simply restate query conditions rather than introduce new binding targets.
\begin{table*}[h]
\centering
\caption{Constraint statistics across three datasets.}
\label{tab:decomp_stats}
\resizebox{0.8\textwidth}{!}{
\begin{tabular}{lccccc}
\toprule
\textbf{Dataset} & \textbf{Avg. Constraints} & \textbf{Avg. Variables} & \textbf{One-Variable (\%)} & \textbf{Two-Variable (\%)}  \\
\midrule
MuSiQue          & 2.89 & 2.32 & 57.08 & 40.23  \\
2Wiki  & 2.41 & 2.35 & 60.95 & 38.59  \\
HotpotQA         & 2.49 & 1.65 & 70.24 & 27.78  \\
\bottomrule
\end{tabular}
}
\end{table*}

\subsection{Correctness of the Sufficiency Check}
\label{app:sufficiency_correctness}

To validate that the sufficiency check provides a meaningful confidence signal rather than an arbitrary threshold heuristic, we analyze its behavior under KGs constructed by different builders.
For each dataset, we randomly sample 200 validation queries and run CS-RAG on the KGs constructed by different builders.
We use the same trade-off thresholds as in the main experiments, with \(\gamma=1.5\) for MuSiQue, \(\gamma=1.6\) for 2Wiki, and \(\gamma=2.0\) for HotpotQA, and keep the threshold fixed across builders within the same dataset.
For each executed constraint hop, we record whether it is resolved and whether the induced binding matches the bridge entity or value implied by the gold supporting evidence.
We report Res. as the rate of resolved structural hops and B-Prec. as the binding accuracy among resolved hops.

Table~\ref{tab:builder_sufficiency} shows that stronger builders allow more constraint hops to pass the sufficiency check, indicating that their KGs more often provide concentrated structural evidence.
For weaker builders, Res. decreases substantially, showing that the check filters a larger fraction of uncertain structural candidates before they can be propagated.
Despite this variation in Res., B-Prec. remains consistently high across builders and datasets, indicating that the accepted structural bindings are generally reliable.
This behavior supports the role of $N_{\mathrm{eff}}$ as a confidence signal for structural propagation.

\begin{table*}[h]
\centering
\caption{Resolved-hop rates and binding precision across KG builders on three datasets.}
\label{tab:builder_sufficiency}
\resizebox{0.85\textwidth}{!}{
\begin{tabular}{lcccccc}
\toprule
\multirow{2}{*}{KG Builder} 
& \multicolumn{2}{c}{MuSiQue} 
& \multicolumn{2}{c}{2Wiki} 
& \multicolumn{2}{c}{HotpotQA} \\
\cmidrule(lr){2-3} \cmidrule(lr){4-5} \cmidrule(lr){6-7}
& Res. (\%) & B-Prec. (\%) 
& Res. (\%) & B-Prec. (\%) 
& Res. (\%) & B-Prec. (\%) \\
\midrule
Qwen-3-1.7B   & 13.8 & 91.4 & 18.4 & 95.2 & 15.6 & 88.8 \\
Qwen-3-14B    & 26.5 & 87.6 & 32.7 & 92.4 & 38.9 & 86.5 \\
Llama-3.1-70B & 30.4 & 84.8 & 38.3 & 90.7 & 52.7 & 87.8 \\
GPT-4o-mini   & 44.9 & 88.0 & 45.8 & 93.6 & 61.5 & 90.3 \\
Gemini-2.5-flash-lite   & 46.9 & 82.5 & 42.9 & 87.4 & 60.4 & 91.3 \\
\bottomrule
\end{tabular}
}
\end{table*}




\subsection{Efficiency Analysis}
\label{appsec:efficiency}

We further evaluate the online retrieval efficiency of CS-RAG under GPT-4o-mini constructed KGs.
Table~\ref{tab:efficiency_all} reports average retrieval latency, retrieval-time LLM token cost, and QA performance on three datasets, excluding offline KG construction and final answer generation. 
Graph-based baselines such as LightRAG, HippoRAG2, and GFM-RAG avoid retrieval-time LLM calls and therefore have zero Token overhead.
NeuroPath incurs higher latency and token usage due to online path-level semantic control.
CS-RAG introduces only a lightweight one-shot planning cost, while avoiding repeated LLM-guided retrieval rounds.
These results show that CS-RAG improves robustness without introducing excessive online retrieval overhead.
\begin{table*}[h]
\centering
\scriptsize
\setlength{\tabcolsep}{3.2pt}
\caption{Efficiency comparison on MuSiQue, 2Wiki, and HotpotQA under GPT-4o-mini constructed KGs.
Time denotes average online retrieval latency in seconds per example and Token denotes the LLM token cost incurred during online retrieval. CS-RAG uses the same trade-off hyperparameters as the main experiments.}
\label{tab:efficiency_all}
\resizebox{1.0\textwidth}{!}{
\begin{tabular}{lcccccccccccc}
\toprule
\multirow{2}{*}{\textbf{Method}} 
& \multicolumn{4}{c}{\textbf{MuSiQue}} 
& \multicolumn{4}{c}{\textbf{2WikiMultihopQA}} 
& \multicolumn{4}{c}{\textbf{HotpotQA}} \\
\cmidrule(lr){2-5} \cmidrule(lr){6-9} \cmidrule(lr){10-13}
& \textbf{Time (S)} & \textbf{Token} & \textbf{EM (\%)} & \textbf{F1 (\%)}
& \textbf{Time (S)} & \textbf{Token} & \textbf{EM (\%)} & \textbf{F1 (\%)}
& \textbf{Time (S)} & \textbf{Token} & \textbf{EM (\%)} & \textbf{F1 (\%)} \\
\midrule
LightRAG   & 1.8 & 0.0  & 3.1  & 9.7  & 1.2 & 0.0  & 9.8  & 18.4 & 1.4 & 0.0  & 15.2 & 26.9 \\
HippoRAG2  & 3.2 & 0.0  & 30.6 & 42.2 & 1.9 & 0.0  & 51.5 & 60.3 & 2.5 & 0.0  & 50.7 & 64.7 \\
GFM-RAG    & 2.8 & 0.0  & 30.5 & 40.9 & 1.6 & 0.0  & 67.8 & 76.9 & 2.0 & 0.0  & 51.2 & 68.0 \\
NeuroPath  & 7.3 & 2.6k & 31.2 & 43.5 & 5.8 & 2.0k & 63.4 & 73.2 & 6.7 & 2.2k & 51.5 & 64.9 \\
\midrule
\textbf{CS-RAG} 
           & 6.3 & 0.86k & 32.2 & 45.1 & 1.7 & 0.72k & 63.0 & 71.0 & 2.2 & 0.76k & 54.3 & 69.8 \\
\bottomrule
\end{tabular}
}
\end{table*}

\subsection{Planner Error Statistics}
\label{app:planner_error_stats}

We evaluate planner output quality to verify that the atomic constraint planning step provides a reliable basis for constraint-based retrieval, rather than introducing frequent planning errors.
Since planner quality directly affects whether retrieval can be grounded in valid constraints, a sufficiently reliable planner is a prerequisite for effective constraint-based retrieval.
For each dataset, we randomly sample 200 queries, generate atomic constraints with the GPT-4o-mini planner, and manually inspect whether each generated constraint correctly reflects the intended reasoning step of the original query, focusing on two aspects: relation semantics and variable usage.
A constraint is marked as incorrect if its relation meaning is drifted or misaligned, or if its variable is missing, swapped, or inconsistently reused.

We report two statistics.
Hop-error rate is the fraction of incorrect constraints among all generated constraints.
Plan-invalid rate is the fraction of queries whose generated plan contains at least one incorrect constraint.
Table~\ref{tab:planner_error_stats} shows that planner errors are relatively infrequent across datasets.
MuSiQue has a slightly higher error rate, which is consistent with its longer and more compositional query structure.
The overall low error rates indicate that the current LLM-based planner provides sufficient planning quality for constructing effective retrieval constraints, making CS-RAG less affected by planner-side errors in our experiments.

\begin{table}[h]
\centering
\small
\setlength{\tabcolsep}{8pt}
\caption{Planner error statistics on 200 sampled queries per dataset.}
\begin{tabular}{l ccc}
\toprule
Metric (\%) & MuSiQue & 2Wiki & HotpotQA \\
\midrule
Hop-error rate $\downarrow$ & 4.6 & 2.8 & 3.4 \\
Plan-invalid rate $\downarrow$            & 7.5 & 4.0 & 5.5 \\
\bottomrule
\end{tabular}

\label{tab:planner_error_stats}
\end{table}

\subsection{Failure Analysis}
\label{app:failure_analysis}

We analyze representative incorrect cases.
Figure~\ref{fig:failure_planner} shows a planner-side error, where the generated constraint plan omits the intermediate variable required by the query and therefore directs retrieval to the wrong entity.
Figure~\ref{fig:failure_evidence} shows an evidence-side insufficiency case, where CS-RAG does not propagate the under-supported binding, but the required bridge evidence is absent or too implicit in both the KG and the recovered textual pool.

\begin{figure*}[h]
\centering
\fbox{
\parbox{0.97\textwidth}{
\footnotesize

\textbf{Query.}
What are the two categories of the non-electric version of the neighbourhood electric vehicle?

\vspace{0.6em}
\fbox{\parbox{0.94\textwidth}{
\textbf{Block 0: Gold evidence chain.}\\[-0.2em]
The query first needs to identify the non-electric version of the neighbourhood electric vehicle, and then retrieve the two categories of that vehicle type.\\
$\tau_0$: neighbourhood electric vehicle \;--\; non-electric version \;--\; ?vehicle\\
$\tau_1$: ?vehicle \;--\; has category \;--\; ?category1\\
$\tau_2$: ?vehicle \;--\; has category \;--\; ?category2\\
\textit{Gold intermediate:} ?vehicle = Motorised quadricycle.\\
\textit{Gold answer:} light quadricycles (L6e) and heavy quadricycles (L7e).
}}

\vspace{0.6em}
\fbox{\parbox{0.94\textwidth}{
\textbf{Block 1: Atomic constraint plan comparison.}\\[-0.2em]
\textit{Expected atomic constraints:}\\
$\tau_0$: neighbourhood electric vehicle \;--\; non-electric version \;--\; ?vehicle\\
$\tau_1$: ?vehicle \;--\; has category \;--\; ?category1\\
$\tau_2$: ?vehicle \;--\; has category \;--\; ?category2\\[0.3em]

\textit{Observed erroneous constraints:}\\
$\tau_0'$: neighbourhood electric vehicle \;--\; has category \;--\; ?category1\\
$\tau_1'$: neighbourhood electric vehicle \;--\; has category \;--\; ?category2\\[0.3em]

\textit{Planner error:} The planner omits the bridge variable ?vehicle required by the phrase ``the non-electric version of the neighbourhood electric vehicle.'' 
Instead of first binding ?vehicle to \emph{Motorised quadricycle}, it directly queries the categories of \emph{neighbourhood electric vehicle}. 
As a result, the generated plan changes the retrieval target from ``categories of Motorised quadricycle'' to ``categories of neighbourhood electric vehicle''.
}}

\vspace{0.6em}
\fbox{\parbox{0.94\textwidth}{
\textbf{Block 2: Retrieval consequence under the wrong executable frontier.}\\[-0.2em]
Because the planner omits the bridge constraint, CS-RAG directly executes the following two constraints:
\[
\tau_0': \text{neighbourhood electric vehicle} \;--\; \text{has category}\;--\; ?\text{category1},
\]
\[
\tau_1': \text{neighbourhood electric vehicle} \;--\; \text{has category}\;--\; ?\text{category2}.
\]
Thus, the retrieval frontier is anchored on \emph{neighbourhood electric vehicle}, rather than on the intended intermediate entity \emph{Motorised quadricycle}. 
Since the local candidate pool around \emph{neighbourhood electric vehicle} contains several classification-like relations, the retriever selects candidates that appear consistent with the erroneous constraints.\\[0.3em]

\textit{Retrieved candidates under the erroneous plan:}\\
(1) neighbourhood electric vehicle \;--\; classified as \;--\; low-speed vehicle\\
(2) neighbourhood electric vehicle \;--\; type of \;--\; battery electric vehicle\\
(3) neighbourhood electric vehicle \;--\; has top speed \;--\; 25 mph\\
(4) neighbourhood electric vehicle \;--\; has maximum loaded weight \;--\; 3,000 lb\\[0.3em]

\textit{Selected bindings under the erroneous plan:}\\
?category1 = low-speed vehicle\\
?category2 = battery electric vehicle\\[0.3em]

\textit{Required but missed bridge and target facts:}\\
neighbourhood electric vehicle \;--\; non-electric version \;--\; Motorised quadricycle\\
Motorised quadricycle \;--\; has category \;--\; light quadricycles (L6e)\\
Motorised quadricycle \;--\; has category \;--\; heavy quadricycles (L7e)\\[0.3em]

\textit{Decision:} The system resolves the two generated category constraints with locally plausible but target-misaligned bindings. 
The final retrieved evidence therefore supports categories of \emph{neighbourhood electric vehicle}, while the gold answer requires categories of its non-electric version, \emph{Motorised quadricycle}.
}}

\vspace{0.6em}
\fbox{\parbox{0.94\textwidth}{
\textbf{Failure diagnosis.}\\[-0.2em]
The failure occurs before structural retrieval begins. 
The planner drops the intermediate bridge variable needed to resolve the phrase ``non-electric version.'' 
CS-RAG then follows the generated constraints and searches for categories of the wrong entity. 
This case therefore reflects a planner-side failure, specifically a missing bridge variable, rather than a failure of constraint-based retrieval or sufficiency checking.
}}

}
}

\caption{Failure analysis of a planner-side error case.}
\label{fig:failure_planner}
\end{figure*}

\begin{figure*}[h]
\centering
\fbox{
\parbox{0.97\textwidth}{
\footnotesize

\textbf{Query.}
When was Lady Godiva's birthplace abolished?

\vspace{0.6em}
\fbox{\parbox{0.94\textwidth}{
\textbf{Block 0: Intended reasoning chain.}\\[-0.2em]
The query requires first identifying Lady Godiva's birthplace, and then retrieving when that place was abolished.\\
$\tau_0$: Lady Godiva \;--\; place of birth \;--\; ?place\\
$\tau_1$: ?place \;--\; abolished in \;--\; ?year\\
\textit{Gold intermediate:} ?place = Mercia.\\
\textit{Gold answer:} ?year = 918.
}}

\vspace{0.6em}
\fbox{\parbox{0.94\textwidth}{
\textbf{Block 1: Structural retrieval for $\tau_0$.}\\[-0.2em]
The structural retriever searches for a birthplace bridge from \emph{Lady Godiva}. 
However, the available KG evidence mainly contains related facts around Godiva, Leofric, and Mercia, rather than a direct birthplace relation.\\[0.3em]
\textit{Retrieved structural candidates:}\\
(1) Lady Godiva \;--\; wife of \;--\; Leofric\\
(2) Spalding Priory \;--\; founded by \;--\; Leofric and Godiva\\
(3) Lady Godiva \;--\; title \;--\; Countess of Leicester\\[0.3em]
\textit{After relation-aware filtering:}\\
(1) Lady Godiva \;--\; wife of \;--\; Leofric\\
(2) Lady Godiva \;--\; title \;--\; Countess of Leicester\\[0.3em]
\textit{Sufficiency signal:} No candidate directly supports the constraint \emph{place of birth}.\\
\textit{Decision:} Unresolved. Do not bind ?place to Mercia.
}}

\vspace{0.6em}
\fbox{\parbox{0.94\textwidth}{
\textbf{Block 2: Textual recovery for $\tau_0$.}\\[-0.2em]
CS-RAG activates textual recovery because the structural hop is unresolved. 
The recovered text still does not provide an explicit bridge from Lady Godiva to Mercia as her birthplace.\\[0.3em]
\textit{Retrieved sentences:}\\
(1) ``It was founded as a cell of Croyland Abbey, in 1052, by Leofric, Earl of Mercia and his wife, Godiva, Countess of Leicester.''\\
(2) ``When \AE thelfl\ae d died in 918, \AE lfwynn, her daughter by \AE thelred, succeeded as Second Lady of the Mercians.''\\
(3) ``Within six months Edward had deprived her of all authority in Mercia and taken her into Wessex.''\\[0.3em]
\textit{Recovery observation:} The retrieved sentences mention Godiva, Leofric, and Mercia, but they do not explicitly state that Lady Godiva's birthplace is Mercia.\\
\textit{Decision:} Still unresolved. The subsequent hop cannot be safely executed.
}}

\vspace{0.6em}
\fbox{\parbox{0.94\textwidth}{
\textbf{Block 3: Failure diagnosis.}\\[-0.2em]
CS-RAG correctly avoids propagating an under-supported binding from Lady Godiva to other entities in Block 1.
This prevents a hallucinated structural chain because the available KG evidence only provides related facts rather than a reliable birthplace bridge.
The failure occurs because textual recovery also fails to retrieve an explicit sentence supporting the missing bridge.
Therefore, the final answer remains unsupported even though the sufficiency check behaves as intended.
This case reflects that CS-RAG can block unsafe continuation, but it cannot recover an answer when the necessary bridge is absent or too implicit in all accessible evidence.
}}

}
}

\caption{Failure analysis of an evidence-side insufficiency case.}
\label{fig:failure_evidence}
\end{figure*}

\subsection{Case Study}

Figure~\ref{fig:case_study_trace} shows a two-hop example illustrating how CS-RAG addresses both spurious noise and incomplete information in an imperfect KG.
The planner converts the query into two atomic constraints with unknown variables (?prison, ?year).
At Hop-1, the 1-hop neighborhood around the anchor contains several semantically similar but constraint-violating tuples, which reflect spurious structural distractors.
After constraint-based retrieval with relation-variant alignment, the candidate distribution becomes concentrated and the sufficiency check yields $N_{\text{eff}}<\gamma$, so the hop is marked as Resolved and ?prison is bound to Tower of London.
At Hop-2, although the candidates remain topically related to the resolved entity, the KG does not provide a tuple that satisfies the target constraint used as a prison until, leading to an ambiguous candidate distribution with $N_{\text{eff}}\ge\gamma$ and an Unresolved decision.
CS-RAG then activates textual recovery, retrieves and reranks passages, and recovers the missing evidence needed to resolve ?year as 1952.
Finally, the QA model is prompted with the query and the constraint-aligned evidence for both hops to generate the answer.

\begin{figure*}[h]
\centering

\fbox{
\parbox{0.97\textwidth}{
\footnotesize

\textbf{Query.}
What year did the prison where \emph{No Cross, No Crown} was written stop being used as a prison?

\vspace{0.6em}
\fbox{\parbox{0.94\textwidth}{
\textbf{Block 0: Atomic constraint planning (unknowns carried across hops).}\\[-0.2em]
$\tau_0$: No Cross, No Crown \;|\; written during imprisonment in \;|\; ?prison\\
$\tau_1$: ?prison \;|\; used as a prison until \;|\; ?year\\
\textit{Note:} Hop-1 resolves ?prison, then Hop-2 resolves ?year (two unknowns solved sequentially).
}}

\vspace{0.6em}
\fbox{\parbox{0.94\textwidth}{
\textbf{Block 1 (Hop-1): solve ?prison from $\tau_0$.}\\[-0.2em]

\textit{Initial 1-hop top-$K$ (before rerank):}\\
(1) No Cross, No Crown \;|\; written during imprisonment in \;|\; Tower of London\\
(2) No Cross, No Crown \;|\; written in \;|\; London\\
(3) No Cross, No Crown \;|\; related to \;|\; Tower Bridge\\
(4) No Cross, No Crown \;|\; composed during incarceration in \;|\; Tower Hamlets\\

\textit{After constraint anchoring + relation-variant rerank:}\\
(1) No Cross, No Crown \;|\; written during imprisonment in \;|\; Tower of London\\
(2) No Cross, No Crown \;|\; written in \;|\; London\\
\textit{Prob. weight:} [0.82, 0.18] \quad \textit{N\_eff:} 1.42 \;(\(< \gamma=2.0\))\\

\textit{Decision:} \textbf{Resolved}. Bind ?prison = Tower of London.
}}

\vspace{0.6em}
\fbox{\parbox{0.94\textwidth}{
\textbf{Block 2 (Hop-2): solve ?year from $\tau_1$.}\\[-0.2em]

\textit{Initial 1-hop top-$K$ (before rerank):}\\
(1) Tower of London \;|\; built in \;|\; 1078\\
(2) Tower of London \;|\; served as \;|\; royal residence\\
(3) Tower of London \;|\; location \;|\; London\\
(4) Tower of London \;|\; attracts \;|\; tourists\\
(5) Tower of London \;|\; held prisoners \;|\; Ranulf Flambard\\
(6) Tower of London \;|\; used for imprisonment during \;|\; Middle Ages\\

\textit{After constraint anchoring + relation-variant rerank:}\\
(1) Tower of London \;|\; built in \;|\; 1078\\
(2) Tower of London \;|\; served as \;|\; royal residence\\
(3) Tower of London \;|\; used for imprisonment during \;|\; Middle Ages\\
\textit{Prob. weight:} [0.25, 0.33, 0.42] \quad \textit{N\_eff:} 2.87 \;(\(\ge \gamma=2.0\))\\

\textit{Decision:} \textbf{Unresolved}.
}}

\vspace{0.6em}
\fbox{\parbox{0.94\textwidth}{
\textbf{Block 3: Textual recovery for $\tau_1$.}\\[-0.2em]

\textit{Retrieved sentences:}\\
(1) ``The castle was used as a prison \dots\ until 1952.''\\
(2) ``A grand palace early in its history, it served as a royal residence \dots''\\
(3) ``The White Tower was built in 1078 \dots''\\

Recovery result: textual evidence supports the answer 1952 without propagating a structural binding.
}}

} 
} 

\caption{A case study trace for \textsc{CS-RAG} (two-hop). Each block reports initial 1-hop top-$K$ candidates (before rerank), the constraint-aware reranked distribution, the solvability signal (N\_eff vs. threshold \(\gamma\)), and the hop decision (Resolved / Unresolved) with optional textual recovery and final answer generation.}
\label{fig:case_study_trace}
\end{figure*}

\section{LLM Prompt}

Figure~\ref{fig:prompt_query_decomposition} shows the prompt used for atomic constraint planning, which converts the original query into a minimal multi-hop chain of query triples with relation variants for downstream structural retrieval.
Figure~\ref{fig:prompt_example} shows the reasoning prompt used for hop-wise evidence reasoning, where the model is provided with atomic constraints and their retrieved evidence blocks and is required to output only the final short answer.

\begin{figure*}[ht]
    \centering
    \setstretch{1.0}
    \begin{fullwidthprompt}
You are an expert in Knowledge Graphs and Query Answering. 
Your task is to perform constraint planning by converting a complex natural language query into an ordered sequence of atomic constraints. Use the following rules:

\vspace{0.3em}
\begin{itemize}
    \setlength\itemsep{0em}
    \item Identify known entities and use \texttt{?}-prefixed variables for unknown ones.
    \item Reuse the same variable only when it clearly refers to the same unknown.
    \item Output a valid JSON object with one key: \texttt{"triples"}.
    \item Each element in \texttt{"triples"} must be an object containing:
    \begin{itemize}
        \setlength\itemsep{0em}
        \item \texttt{"head"}: string
        \item \texttt{"relation"}: string
        \item \texttt{"relation\_variants"}: m short, logically similar relation strings
        \item \texttt{"tail"}: string
    \end{itemize}
    \item Triples must form a minimal yet sufficient reasoning chain: filling all \texttt{?} enables answering the query using only these triples.
    \item For property queries (nationality/position/time/location/etc.), link the property \texttt{?} directly to the target entity variable, not a side entity.
    \item Make \texttt{"relation\_variants"} informative enough to be understood without the original query.
    \item Encode extra conditions on the same entity variable, or merge them into the entity mention as modifiers; do not drop useful descriptors.
    \item Never let one variable represent multiple distinct unknowns; use separate names when unclear. (e.g. \texttt{?country1}, \texttt{?country2}) when needed.
    \item If the query provides a concrete value (country/year/location), connect directly to it; don't introduce a \texttt{?} just to equal that value.
    \item Keep variable names short/generic; express details via relations/triples, not long variable names.
    \item Prefer triples that contain at least one ?-variable; avoid known-known triples unless they are necessary to preserve an explicit query condition.
\end{itemize}

\vspace{0.3em}
Examples:
\vspace{0.3em}

\textbf{Query}: ``When did Lothair II's mother die?''\\
\textbf{Output}:{\footnotesize
\begin{verbatim}
{
  "triples": [
    {
      "head": "Lothair II",
      "relation": "has mother",
      "relation variants": [ "has parent", "mother is", "is child of"],
      "tail": "?person"
    },
    {
      "head": "?person",
      "relation": "died on",
      "relation variants": ["date of death", "passed away on", "death date"],
      "tail": "?date"
    }
  ]
}
\end{verbatim}
}
\textbf{Query:}
    \end{fullwidthprompt}
    \caption{The prompt used for atomic constraint planning.}
    \label{fig:prompt_query_decomposition}
\end{figure*}

\begin{figure*}[ht]
    \centering
    
    \setstretch{1.15}
    \begin{fullwidthprompt}
You are a strict multi hop inference assistant that can complete multi hop inference based on given information.
Now, you need to conduct rigorous and rational multi-step thinking based on the given search evidence to answer this query. You only need to use the given information to answer the query.
We perform Atomic Constraint Planning and obtain multiple atomic constraints. For each atomic constraint, several pieces of relevant evidence are retrieved. You should use the provided evidence to resolve the variables in the atomic constraints step by step, and derive the final answer.
You can only answer the final answer with short and appropriate phrases (such as names, numbers, or short noun phrases that conform to the query format). Do not include explanations, sentences, or any other words.
Here is the information you can obtain:

Query: How many times did the plague occur in the city where the painter of The Bacchanal of the Andrians died?

The atomic constraint plan is as follows. Each atomic constraint contains an approximate relation expression and a variable that should be resolved using the retrieved evidence. For each atomic constraint, the retrieved evidence is provided below:

$\tau_0$: The Bacchanal of the Andrians | painted by | ?painter

     Evidence 1: $<$ retrieved evidence text $>$ 
     
    Evidence 2: $<$ retrieved evidence text $>$ 
    
    Evidence 3: $<$ retrieved evidence text $>$  

$\tau_1$: ?painter | died in | ?city

    Evidence 1: $<$ retrieved evidence text $>$ 
    
        Evidence 2: $<$ retrieved evidence text $>$  
        
    Evidence 3: $<$ retrieved evidence text $>$  

$\tau_2$: ?city | number of plague occurrences | ?count

    Evidence 1: $<$ retrieved evidence text $>$ 
    
        Evidence 2: $<$ retrieved evidence text $>$  
        
    Evidence 3: $<$ retrieved evidence text $>$

Based on the above information, answer the query.

Query: How many times did the plague occur in the city where the painter of The Bacchanal of the Andrians died? Answer:
    \end{fullwidthprompt}
    \caption{A QA prompt template of CS-RAG used for multi-hop reasoning.}
    \label{fig:prompt_example}
\end{figure*}


\end{document}